\documentclass[english,12pt]{iopart}
\usepackage[T1]{fontenc}
\usepackage[latin9]{inputenc}
\usepackage{units}
\usepackage{amsthm}

\makeatletter
\usepackage{iopams}
\usepackage{setstack}
\theoremstyle{plain}
\newtheorem{thm}{Theorem}
\theoremstyle{definition}
\newtheorem{defn}[thm]{Definition}
\theoremstyle{plain}
\newtheorem{prop}[thm]{Proposition}
 \ifx\proof\undefined\
   \newenvironment{proof}[1][\proofname]{\par
     \normalfont\topsep6\p@\@plus6\p@\relax
     \trivlist
     \itemindent\parindent
     \item[\hskip\labelsep
           \scshape
       #1]\ignorespaces
   }{%
     \endtrivlist\@endpefalse
   }
   \providecommand{\proofname}{Proof}
 \fi
\theoremstyle{remark}
\newtheorem{rem}[thm]{Remark}
\theoremstyle{plain}
\newtheorem{lem}[thm]{Lemma}
\theoremstyle{plain}
\newtheorem{cor}[thm]{Corollary}

\makeatother

\usepackage{babel}
\eqnobysec
\begin{document}

\title[Eigenvalue repulsion estimates...]{Eigenvalue repulsion estimates and some applications for the one-dimensional
Anderson model}

\author{Alexander Rivkind}

\address{Physics Department, Technion - Israel Institute of Technology, Haifa
32000, Israel.}

\ead{sashkarivkind@gmail.com}

\author{Yevgeny Krivolapov}

\address{Physics Department, Technion - Israel Institute of Technology, Haifa
32000, Israel.}

\ead{evgkr@tx.technion.ac.il}

\author{Shmuel Fishman}

\address{Physics Department, Technion - Israel Institute of Technology, Haifa
32000, Israel.}

\ead{fishman@physics.technion.ac.il}

\author{Avy Soffer}

\address{Mathematics Department, Rutgers University, New-Brunswick, NJ 08903,
USA.}

\ead{soffer@math.rutgers.edu}
\begin{abstract}
We show that the spacing between eigenvalues of the discrete 1D Hamiltonian
with \textit{arbitrary potentials} which are bounded, and with Dirichlet
or Neumann Boundary Conditions is bounded away from zero. We prove
an explicit lower bound, given by $Ce^{-bN}$, where $N$ is the lattice
size, and $C$ and $b$ are some finite constants. In particular,
the spectra of such Hamiltonians have no degenerate eigenvalues. As
applications we show that to leading order in the coupling, the solution
of a nonlinearly perturbed Anderson model in one-dimension (on the
lattice) remains exponentially localized, in probability and average
sense for initial conditions given by a unique eigenfunction of the
linear problem. We also bound the derivative of the eigenfunctions
of the linear Anderson model with respect to a potential change.
\end{abstract}
\maketitle

\section{Introduction}

We consider the one dimensional Anderson model on the lattice, $\Lambda$,\begin{equation}
H_{\omega}^{\Lambda}u_{n}\left(x\right)=u_{n}\left(x+1\right)+u_{n}\left(x-1\right)+\varepsilon_{x}u_{n}\left(x\right)=E_{n}u_{n}\left(x\right),\end{equation}
 with $x,n\in\mathbb{Z}$, $\omega=\left\{ \varepsilon_{x}\right\} $
is the realization of the potential, $H_{\omega}^{\Lambda}$ is the
Hamiltonian on the domain $\Lambda$, with eigenfunctions $\left\{ u_{n}\left(x\right)\right\} \in L^{2}\left(\Lambda\right)$
and eigenvalues $E_{n}$. We will also denote by $N\equiv\left|\Lambda\right|$
the size of the domain. Furthermore, $H_{\omega}^{\Lambda}$ satisfies
some boundary conditions to be specified later, which include both
the Dirichlet and Neumann boundary conditions. Our first result applies
to \emph{arbitrary} uniformly bounded potential,\begin{equation}
\sup_{x\in\Lambda}\left|\varepsilon_{x}\right|\equiv W<\infty.\end{equation}
 We will show in the next section that the minimal distance between
the eigenvalues of $H_{\omega}^{\Lambda}$ is bounded below by a constant
of order $e^{-bN}$ for every $\omega$ and as long as $W<\infty$
and the boundary conditions defining $H_{\omega}^{\Lambda}$ are of
the allowed class. Note, that this result holds for \emph{all} bounded
potentials. Our proof, while not necessarily the simplest one, is
instructive and may be of more general interest.

Then, in the next section we show two applications, motivated by the
study of Anderson localization problem, both linear and nonlinear.
In particular, in \cite{FKS1,FKS2} we have shown that for the nonlinearly
perturbed Anderson model, \begin{equation}
i\partial_{t}\psi=H_{\omega}^{\Lambda}\psi+\beta\left|\psi\right|^{2}\psi,\label{eq:NLS}\end{equation}
 with the initial condition of $\psi\left(x,0\right)=u_{0}\left(x\right)$,
the first order nonlinear correction to the solution is given by,\begin{equation}
\psi^{\left(1\right)}\left(x,t\right)=\beta\sum_{n}c_{n}^{\left(1\right)}\left(t\right)u_{n}\left(x\right)e^{-iE_{n}t},\label{eq:intr_psi1}\end{equation}
 with\begin{equation}
c_{n}^{\left(1\right)}\left(t\right)=\frac{V_{n}^{000}}{E_{n}-E_{0}}\left(1-e^{i\left(E_{n}-E_{0}\right)t}\right).\label{eq:intr_c1}\end{equation}
 Higher order corrections involve products of $c_{n}^{\left(1\right)}$
and other combinations of energies. Relevant estimates were recently
proven for such combinations in \cite{AW} . Note, that since $H_{\omega}^{\Lambda}$
depends on the realization of the potential, $\omega$, so is $u_{0}\left(x\right)$.
We will show here that on average, the fractional power of the solution
of \eref{eq:NLS} to the first order in $\beta,$ remains exponentially
bounded for all times, we also show that the ordinary average is exponentially
bounded at least for times which are exponential in $N$.

In the second application we control the averages of fractional powers
of the derivative of the eigenfunctions of $H_{\omega}^{\Lambda}$
with respect to some $\varepsilon_{x}$, and show that they are exponentially
small in the distance between $x$ and the localization center of
the eigenfunction. We also bound the averages by some power of the
volume of the system and interpolate between the fractional and ordinary
averages.

The proof of the eigenvalue repulsion is based on the transfer matrix
representation of the solutions of the one dimensional problem, and
study the dependence of the eigenfunctions on the energy \cite{ishi}.
By studying the properties of the matrices as transformations of the
Hyperbolic space, in terms of the complex energy as a parameter, the
presence and absence of continuous spectrum for classes of Random
Schr\"odinger operators on graphs can be naturally analyzed \cite{FHS}.
This is close to our approach. We then show, that the condition for
the energy parameter to have a value that corresponds to an eigenvalue,
requires a path to return to the starting point, in some sense. Next,
we prove monotonicity of a rotation number/angle associated with the
path, as a function of the energy parameter. Monotonicity with respect
to the energy parameter, is also used in the hyperbolic space representation;
there, it appears as a basic property of the Mobius transformation
\cite{BHS}. By bounding the rate of rotation from above, as a function
of the energy parameter, a minimal distance between the eigenvalues
follows.

The applications mentioned above, use the exponentially small minimal
distance between the eigenvalues in a crucial way. We decompose dyadically
the space of potentials, $\omega$, to subsets where the minimal distance
between eigenvalues is in a dyadic interval, $I_{m}\in\left[2^{-m-1},2^{-m}\right]$.
Then, the sum over $m$ is bounded up to, $m\leq\bar{b}N$, due to
the eigenvalue repulsion. We estimate each term by a combination of
two probabilistic estimates: first, the Minami estimate for the probability
to find at least two eigenvalues in an interval $I$, \cite{Mi}\begin{equation}
\Pr\left(TrP_{H_{\omega}^{\Lambda}}^{(\Lambda)}\left(I\right)\geq2\right)\leq\left(\pi\left\Vert \rho\right\Vert _{\infty}I\, N\right)^{2},\label{eq:Minami0}\end{equation}
 where $\left\Vert \rho\right\Vert _{\infty}$ is the supremum of
the density of states, $I$ is some energy interval while $P_{H_{\omega}^{\Lambda}}^{(\Lambda)}(I)$
is the spectral projection on that interval and $H_{\omega}^{\Lambda}$
is the Hamiltonian corresponding to a one-dimensional Anderson problem
with Dirichlet boundary conditions on a domain $\Lambda$. The second
bound we use is the fractional moment bound of Aizenman \cite{A}
(see also related bounds in \cite{AG,GK}), \begin{equation}
\left\langle \sum_{n}\left|u_{n}\left(x\right)u_{n}\left(y\right)\right|\right\rangle \leq De^{-\mu\left|x-y\right|},\label{eq:Aizenman}\end{equation}
 where $\mu>0$, and $D>0$ are some constants.

\section{Lower bound on level spacings}

\subsection{\label{sec:Main-Result}Main Result}

The Main Result is the following Theorem:
\begin{thm}
(eigenvalue repulsion)\label{main_thm} Given the tight binding model:
\[
Hu_{n}=u_{n-1}+\varepsilon_{n}u_{n}+u_{n+1};\qquad1\le n\le N\]
 with Dirichlet boundary conditions ($u_{0}=0$, $u_{N+1}=0$) or
Neumann boundary conditions ($u_{0}=u_{1}$, $u_{N+1}=u_{N}$) and
with $0\le\varepsilon_{n}\le W<\infty$ for all n, there exists a
constant 0$<$$\eta(W)$$<$1 such that: $\left|E_{i}-E_{j}\right|\ge\frac{\pi(\eta^{-1}(W)-1)}{\eta^{-N}(W)-1}$
are for all $i\ne j$ and eigenvalues $E_{i},E_{j}$ (in this Section
as well as in \ref{sec:AppA} and \ref{sec:AppB}
lattice sites are denoted by $n$).
\end{thm}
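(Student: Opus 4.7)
The plan is to parametrize solutions of the three-term recurrence by a Pr\"ufer-type angle $\theta_n(E)$ and then exploit two properties of this angle viewed as a function of $E$: strict monotonicity, and an explicit upper bound on its $E$-derivative that grows at most geometrically in $N$. Concretely, for each fixed $E$ in the compact interval $[-2-W,2+W]$ that contains the spectrum of $H$, I would solve the recurrence forward from the left boundary condition and introduce polar coordinates on the pair $(u_n,u_{n+1})$. The right boundary condition then becomes a discrete condition of the form $\theta_{N+1}(E)\equiv\theta^{\ast}\pmod{\pi}$, whose solutions in $E$ are exactly the eigenvalues of $H$.

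Next I would establish that $\partial_{E}\theta_{N+1}(E)>0$, the discrete analogue of Sturmian monotonicity, by differentiating the one-step Pr\"ufer recursion and checking site by site that each increment $\partial_E(\theta_{n+1}-\theta_n)$ is nonnegative. Monotonicity implies that any two eigenvalues $E_i<E_j$ satisfy $\theta_{N+1}(E_j)-\theta_{N+1}(E_i)\geq\pi$. The crux of the proof is then the explicit upper bound
\[
\partial_{E}\theta_{N+1}(E)\;\leq\;\sum_{k=0}^{N-1}\eta(W)^{-k}
\]
for a constant $\eta(W)\in(0,1)$ depending only on $W$. The idea is that $\partial_E\theta_{N+1}$ accumulates contributions from each site $k$; each such local rotation rate is $O(1)$ in $W$, but to translate it into a change of $\theta_{N+1}$ one must propagate it through the remaining $N-k$ transfer matrices. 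Because every transfer matrix lies in $\mathrm{SL}(2,\mathbb{R})$ with norm bounded in terms of $W$, this propagation is at most geometric with ratio $\eta(W)^{-1}$, which produces the geometric sum above.

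Combining monotonicity with the derivative bound, for any two distinct eigenvalues $E_i<E_j$ one has
\[
\pi\;\leq\;\theta_{N+1}(E_j)-\theta_{N+1}(E_i)\;\leq\;(E_j-E_i)\sum_{k=0}^{N-1}\eta^{-k}\;=\;(E_j-E_i)\,\frac{\eta^{-N}-1}{\eta^{-1}-1},
\]
which rearranges to $|E_j-E_i|\geq\pi(\eta^{-1}-1)/(\eta^{-N}-1)$, the stated bound. The main obstacle is the geometric derivative bound: one must identify the explicit $\eta(W)$ and verify that a local energy perturbation, when pushed through an \emph{arbitrary} product of bounded $\mathrm{SL}(2,\mathbb{R})$ transfer matrices, is amplified by no more than the geometric factor $\eta(W)^{-1}$ per step. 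This is the step in which the precise dependence on $W$ appears and in which the arbitrariness of the potential, as opposed to any probabilistic hypothesis, is genuinely exploited; the monotonicity and eigenvalue-condition arguments are essentially soft and follow the standard hyperbolic-geometry / M\"obius picture already alluded to in the introduction.
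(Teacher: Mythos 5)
Your route coincides with the paper's: Pr\"ufer angle $\varphi_n(E)$, eigenvalue condition $\varphi_{N+1}\equiv\theta^{*}\ (\mathrm{mod}\ \pi)$, monotonicity in $E$, a geometric bound on $\partial_E\varphi_{N+1}$, and the final rearrangement, which you reproduce verbatim. But one subsidiary step is wrong as stated, and the central estimate is named rather than proved. The wrong step is the monotonicity argument: it does \emph{not} follow from ``checking site by site that each increment $\partial_E(\theta_{n+1}-\theta_n)$ is nonnegative'', because that claim is false. Differentiating the one-step recursion gives
\[
\varphi'_{n+1}(E)=\frac{1+\varphi'_{n}/\sin^{2}\varphi_{n}}{1+(E-\varepsilon_{n}-\cot\varphi_{n})^{2}},
\]
and with $\varphi_{n}=\pi/2$, $E-\varepsilon_{n}=10$, $\varphi'_{n}=10$ one gets $\varphi'_{n+1}=11/101<\varphi'_{n}$, so the increment has negative $E$-derivative. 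What survives, and what the paper proves by induction, is the weaker implication $\varphi'_{n}>0\Rightarrow\varphi'_{n+1}>0$ (the displayed quotient is manifestly positive), supplemented by a continuity and one-sided-derivative matching argument at the exceptional points $\varphi_{n}=k\pi$ where $\cot\varphi_{n}$ blows up; you need that version.

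The unfilled gap is the one-step inequality $\varphi'_{n+1}\le 1+\varphi'_{n}/\eta(W)$, which you flag as ``the main obstacle'' but do not carry out, and your $\mathrm{SL}(2,\mathbb{R})$ heuristic must be sharpened before it yields the additive-plus-geometric structure (a na\"ive norm bound on the product of the remaining $N-k$ matrices does not factor per step). The precise mechanism is local: the derivative of the projective (angle) action of one transfer step equals $\|v\|^{2}/\|T_{n}v\|^{2}$, which in Pr\"ufer variables is $1/q$ with $q=1-(E-\varepsilon_{n})\sin 2\varphi_{n}+(E-\varepsilon_{n})^{2}\sin^{2}\varphi_{n}$, while the explicit $E$-derivative of the one-step map is $\bigl(1+(E-\varepsilon_{n}-\cot\varphi_{n})^{2}\bigr)^{-1}\le 1$; together these give exactly $\varphi'_{n+1}\le 1+\varphi'_{n}/q$. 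The paper then sets $\eta(W)=\min q>0$ over the compact set $\{|E-\varepsilon_{n}|\le W+1\}\times\{0\le\varphi_{n}\le\pi\}$, positivity following from $\min_{x}q=\sin^{2}\varphi_{n}$ together with $q=1$ at $\varphi_{n}=k\pi$, with the explicit value computed in the appendix. Your bound $1/q\le\|T_{n}\|^{2}$ would also produce a valid (somewhat worse) $\eta$, so the idea is sound, but as written the proposal stops precisely where the quantitative content of the theorem begins.
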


\subsection{Setup}

For simplicity, we first prove the main Theorem for Dirichlet b.c.,
and then describe the modifications needed for Neumann case in a separate
subsection. Obviously,\textit{ E} is an eigenvalue of \textit{H} ,
if and only if, there exists a non-trivial vector $\vec{u}=\left\{ u_{n}\right\} _{n=0}^{N+1}$
so that $H\vec{u}=E\vec{u}$ and: \begin{eqnarray}
 & u_{0} & =0\label{eq_1a}\\
 & u_{N+1} & =0.\label{eq_1b}\end{eqnarray}
 Since $u_{0}=u_{1}=0$ implies $\vec{u}\equiv0$, we can set (without
loss of generality): \begin{equation}
u_{1}=1.\label{eq_1c}\end{equation}
 For arbitrary \textit{E}, given \eref{eq_1a}, \eref{eq_1c} we
can calculate all the components $u_{n}$ of $\vec{u}$ by recursive
formula: \begin{equation}
u_{n+1}(E)=(E-\varepsilon_{n})u_{n}(E)-u_{n-1}(E).\label{eq_2}\end{equation}
 $E$ is an eigenvalue of $H$ iff \eref{eq_1b} holds.
\begin{defn}
\label{phi_def_gen} Let $\alpha_{n}(E)$ be the angle between the
2D vector $(u_{n-1}(E),u_{n}(E)),$ and the positive direction of
abscissa at the corresponding Cartesian plane. $\varphi_{n}(E)$ is
said to be a \emph{version of angular ratio} between $u_{n-1}(E)$
and $u_{n}(E),$ iff $\exists m\in\mathbb{Z},$ so that $\alpha_{n}(E)+2m\pi=\varphi_{n}(E)$.
\end{defn}
By the definition \ref{phi_def_gen}, it holds for $k\in\mathbb{Z}$
: \begin{equation}
\varphi_{N+1}=k\pi\Leftrightarrow u_{N+1}=0\Leftrightarrow E-eigenvalue.\label{eq_3}\end{equation}
 Also:
\begin{enumerate}
\item If $u_{n-1}\ne0$ then: $\tan(\varphi_{n}):=\frac{u_{n}}{u_{n-1}}$,
\item If $u_{n}\ne0$ then: $\cot(\varphi_{n}):=\frac{u_{n-1}}{u_{n}}$.
\end{enumerate}
By the recursive formula \eref{eq_2}, we have:
\begin{equation}
\fl
\varphi_{n+1}(E)=
\left\{
\begin{array}{ll}
\arctan\left(E-\varepsilon_{n}-\cot\varphi_{n}(E)\right)+(k+2m)\pi  & k\pi<\varphi_{n}(E)<(k+1)\pi\\
\left(k-\frac{1}{2}\right)\pi+2m\pi                                 &  \varphi_{n}(E)=k\pi
\end{array}
\right.
\label{eq_4}
\end{equation}

where $k,m\in\mathbb{Z}$ and $\arctan:\,\mathbb{R}\to\left(-\frac{\pi}{2},\frac{\pi}{2}\right)$.

The upper row of \eref{eq_4} can be obtained by dividing \eref{eq_2}
by $u_{n}$, identifying $\cot(\varphi_{n}):=\frac{u_{n-1}}{u_{n}}$,
taking \textit{arctan} of both sides, and then adding arbitrary integer
number of full rotations\textit{ m}. (In what follows $m=0$ will
be chosen so that $\varphi_{n}$ are differentiable as functions of
$E$.)

The lower row of \eref{eq_4}, refers to the case when $\varphi_{n}(E)=k\pi\Leftrightarrow u_{n}=0\Rightarrow u_{n+1}=-u_{n-1}$.
Then, it is obvious by considering the directions of 2D vectors $(u_{n-1},0)$
and $(0,-u_{n+1})$. Once again, arbitrary integer number of full
rotations is added.

By \eref{eq_4}, the sequence $\left\{ \varphi_{n}\right\} $ is
well defined up to addition of $2m\pi$, $m\in\mathbb{N}$. Therefore,
any version of \eref{eq_4} can be used, to verify if the condition
\eref{eq_3} holds for an energy \textit{E.} In particular, we can
set $m\equiv0$ and choose $\left\{ \varphi_{n}\right\} _{n=1}^{N+1}$
to be:

\begin{equation}
\fl
\begin{array}{l}
\varphi_{1}(E)\equiv\frac{\pi}{2}\\
\varphi_{n+1}(E)=\left\{ \begin{array}{ll}
\arctan\left(E-\varepsilon_{n}-\cot\varphi_{n}(E)\right)+k\pi & k\pi<\varphi_{n}(E)<(k+1)\pi\\
\varphi_{n+1}(E)=\left(k-\frac{1}{2}\right)\pi & \varphi_{n}(E)=k\pi\end{array}\right.
\end{array}\label{eq_5}
\end{equation}

The version \eref{eq_5} of \eref{eq_4} is especially convenient
for our further use, because $\varphi_{N+1}(E)$ turns to be a continuously
differentiable function of $E$ (see Proposition \ref{prop1}).

The angle variable $\varphi_{n}$ is known as the Pr\"uffer angle \cite{KLS}%
\footnote{We thank Michael Aizenman for bringing this to our attention%
}.

\subsection{Proof}

\subsubsection{Proof for Dirichlet boundary conditions}

Eigenvalues satisfy \eref{eq_3} as explained in the setup\textbf{
}section. We will show that $\varphi_{N+1}(E)$ rotates monotonously
counterclockwise (Proposition\textbf{ \ref{prop1}}), and that there
is no degeneracy (Proposition\textbf{ \ref{prop2})}.

We then show that the rotation speed $\varphi'_{N+1},$ is bounded
from above (Proposition\textbf{ \ref{prop3}}). But $\varphi_{N+1}(E)$
must change by angle of $\pi$ between every pair of eigen-energies,
(see \eref{eq_3}), and its rotation speed is bounded from above,
therefore, the spacing between eigenvalues is bounded from below;
that is: $|E_{i+1}-E_{i}|\ge\frac{\pi}{\varphi_{N+1}^{'\max}}$.
\begin{prop}
\label{prop1} $\varphi_{N+1}(E)$ is a continuously differentiable
and a strictly increasing function of E. \end{prop}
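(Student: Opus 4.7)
My plan is to prove the proposition by induction on $n$, establishing simultaneously that $\varphi_n$ is $C^1$ on $\mathbb{R}$ and that $\varphi_n'(E)\ge 0$ for every $E$, with strict inequality once $n\ge 2$; the target statement is then the case $n=N+1$. The base case is immediate because $\varphi_1\equiv \pi/2$ is constant, so $\varphi_1'=0$ and the $C^1$ condition is automatic.

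For the inductive step, I would first work on the open set where $\varphi_n(E)\notin\pi\mathbb{Z}$, so that only the upper branch of \eref{eq_5} is active. Differentiating, using $\tfrac{d}{dE}\cot\varphi_n=-\csc^2(\varphi_n)\,\varphi_n'$ and the chain rule on $\arctan$, one obtains
\begin{equation*}
\varphi_{n+1}'(E)=\frac{1+\csc^2(\varphi_n(E))\,\varphi_n'(E)}{1+\bigl(E-\varepsilon_n-\cot\varphi_n(E)\bigr)^2}.
\end{equation*}
Since $\csc^2\ge 1$ and $\varphi_n'\ge 0$ by the induction hypothesis, the numerator is at least $1$ and the denominator is positive and smooth, so $\varphi_{n+1}$ is $C^1$ on this open set with $\varphi_{n+1}'>0$. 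Crucially, the extra $+1$ in the numerator shows that monotonicity propagates even if $\varphi_n'$ happens to vanish, which is what lets the induction start from the constant function $\varphi_1$.

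The main obstacle, and the only delicate point, is the behavior at the isolated values $E^*$ with $\varphi_n(E^*)=k\pi$ (equivalently $u_n(E^*)=0$), where $\cot\varphi_n$ is singular and the recursion switches to its lower branch $\varphi_{n+1}(E^*)=(k-\tfrac12)\pi$. I would verify continuity by taking one-sided limits: as $E\to E^{*+}$, $\varphi_n\to k\pi^+$, so $\cot\varphi_n\to +\infty$, the arctangent tends to $-\pi/2$, and $\varphi_{n+1}\to -\pi/2+k\pi=(k-\tfrac12)\pi$; from the left the branch index is $k-1$, $\cot\varphi_n\to -\infty$, the arctangent tends to $+\pi/2$, and again $\varphi_{n+1}\to(k-\tfrac12)\pi$, matching the prescribed value. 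For differentiability, I would substitute $c:=\cot\varphi_n(E)$ into the derivative formula and use $\csc^2\varphi_n=1+c^2$: as $c\to\pm\infty$, both numerator and denominator are dominated by their $c^2$ terms and the ratio tends to $\varphi_n'(E^*)>0$. Since $\varphi_{n+1}$ is already continuous at $E^*$, differentiable on a punctured neighborhood, and its derivative has a common two-sided limit there, a standard mean-value-theorem argument upgrades this to differentiability at $E^*$ itself with $\varphi_{n+1}'(E^*)=\varphi_n'(E^*)$, closing the induction.
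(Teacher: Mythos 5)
Your proof is correct and follows essentially the same route as the paper's: induction on $n$ using the derivative identity $\varphi_{n+1}'=\bigl(1+\csc^{2}(\varphi_{n})\varphi_{n}'\bigr)/\bigl(1+(E-\varepsilon_{n}-\cot\varphi_{n})^{2}\bigr)$ on the regular set, plus matching one-sided limits of $\varphi_{n+1}$ and of its derivative at the points where $\varphi_{n}=k\pi$. The only differences are cosmetic refinements in your favor: you start the induction at the constant $\varphi_{1}\equiv\pi/2$ rather than at $\varphi_{2}$ (so you track $\varphi_{n}'\ge 0$ with strictness from $n=2$ on), and you make explicit the mean-value-theorem step that converts the common limit of $\varphi_{n+1}'$ into genuine differentiability at the exceptional points, which the paper leaves implicit.
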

\begin{proof}
$\varphi_{2}(E)=\arctan(E-\varepsilon_{1})$ is continuously differentiable
and $\varphi'_{2}(E)>0$ . Next we use induction in \textit{n}:

If $\varphi_{n}\ne k\pi,$ then $\varphi_{n+1}(E)$ is continuously
differentiable and increasing, according to the definition, since
\textit{arctan(.)} is a strictly increasing differentiable function
of its argument. The argument of \textit{arctan(.)}: $E-\cot(\varphi_{n})$
is strictly increasing and continuously differentiable (by induction
assumption on $\varphi_{n}$ starting from $\varphi_{2}$).

If $\varphi_{n}=k\pi,$ and continuously differentiable and increasing
(with respect to $E$), then: $\varphi_{n+1}(E)$ is continuous, because
the single side limits \eref{lim_phi_pi_minus} and \eref{lim_phi_pi_plus}
are equal to each other as follows from the definition \eref{eq_5}:
\begin{eqnarray}
\lim_{\varphi_{n}\to k\pi^{-}}\varphi_{n+1}(E)=\pi(k-1)+\lim_{\varphi_{n}\to k\pi^{-}}\arctan(-\cot(\varphi_{n}))=\nonumber \\
=\pi(k-1)+\arctan(-(-\infty))=k\pi-\frac{\pi}{2}\label{lim_phi_pi_minus}\\
\lim_{\varphi_{n}\to k\pi^{+}}\varphi_{n+1}(E)=k\pi+\lim_{\varphi_{n}\to k\pi^{+}}\arctan(-\cot(\varphi_{n}))=\nonumber \\
=k\pi+\arctan(-(+\infty))=k\pi-\frac{\pi}{2}\label{lim_phi_pi_plus}\end{eqnarray}
 left and right {}``single-sided'' derivatives of $\varphi_{n+1}(E)$
exist and equal. That is because, for any $\varphi_{n}\ne k\pi$ point,
it holds: \begin{equation}
\varphi'_{n+1}(E)=\frac{d\varphi_{n+1}(E)}{dE}=\frac{1+\frac{\varphi'_{n}}{\sin^{2}\varphi_{n}}}{1+(E-\varepsilon_{n}-\cot\varphi_{n})^{2}}\label{eq_6}\end{equation}
 Taking the single side limits at $\varphi_{n}=k\pi,$ and recalling
that $\varphi_{n+1}$ is continuous, one gets: \begin{equation}
\begin{array}{l}
\lim_{\varphi_{n}\to k\pi^{+}}\varphi'_{n+1}(E)=\lim_{\varphi_{n}\to k\pi^{-}}\varphi'_{n+1}(E)^{}=\varphi'_{n}(E)\\
\Rightarrow\varphi'_{n+1}(E)=\varphi'_{n}(E).\end{array}\label{eq_7}\end{equation}
 Hence derivative exists, and (by induction) is positive as required. \end{proof}
\begin{prop}
\label{prop2}The spectrum of H is simple.
\end{prop}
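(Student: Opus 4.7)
The plan is to establish simplicity by showing that the eigenspace of any eigenvalue is one-dimensional. This is the content that is not yet in place: earlier Propositions give monotonicity of $\varphi_{N+1}(E)$, but simplicity of the spectrum of a Hermitian operator means each eigenvalue has multiplicity one, which requires either distinctness of eigenvalues together with a dimension count, or a direct argument that the kernel of $H-E$ is one-dimensional.

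The cleanest direct route is to invoke the recursion \eref{eq_2} together with the Dirichlet condition $u_{0}=0$. Suppose $\vec{u}$ and $\vec{v}$ are two eigenvectors with the same eigenvalue $E$, both satisfying $u_{0}=v_{0}=0$. Form the linear combination $\vec{w}:=v_{1}\vec{u}-u_{1}\vec{v}$. Then $\vec{w}$ is also a solution of $H\vec{w}=E\vec{w}$ with $w_{0}=0$ and $w_{1}=0$. Feeding these two initial values into \eref{eq_2} and iterating, we obtain $w_{n}=0$ for all $n$. Hence $\vec{u}$ and $\vec{v}$ are linearly dependent, so the eigenspace is (at most) one-dimensional, and simplicity follows.

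An equivalent argument, more in the spirit of the paper's Prüffer-angle framework, would be to use Proposition \ref{prop1} directly: by \eref{eq_3} an energy $E$ is an eigenvalue iff $\varphi_{N+1}(E)\in\pi\mathbb{Z}$, and by the strict monotonicity of $\varphi_{N+1}$ each value $k\pi$ can be attained at most once. Since $H$ is an $N\times N$ Hermitian matrix with $N$ eigenvalues counted with multiplicity, the injectivity of the assignment $E\mapsto k$ forces the $N$ eigenvalues to be pairwise distinct, which for a Hermitian matrix is equivalent to simplicity.

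I do not expect any real obstacle: both arguments are short and essentially bookkeeping, resting either on the standard fact that a Jacobi matrix with nonzero off-diagonal entries (here all equal to $1$) has simple spectrum, or on the already-proved monotonicity of $\varphi_{N+1}$. The only minor care needed is noting that $u_{1}\neq 0$ for any eigenvector (since $u_{0}=u_{1}=0$ would propagate via \eref{eq_2} to the trivial solution, as was already observed in the setup).
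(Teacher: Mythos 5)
Your first argument (two eigenvectors with the same eigenvalue yield a combination vanishing at two adjacent sites, which the recursion \eref{eq_2} propagates to the trivial solution) is precisely the paper's Theorem \ref{simplicity_thm}, which the authors explicitly invoke as the general proof of this Proposition; your second sketch is the paper's own Pr\"uffer-angle argument, namely $\varphi'_{N+1}(E)>0$ combined with the limits of $\varphi_{N+1}$ at $E\to\pm\infty$ (the latter being what you would need to spell out to make the counting step precise, since injectivity of $E\mapsto k$ alone does not rule out a multiple eigenvalue). Both routes are correct and match the paper's approach.
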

General proof of the simplicity of spectrum is given in theorem \ref{simplicity_thm}.
Simplicity of spectrum can also be shown using {}``$\varphi$''
(Pruffer angle \cite{KLS}) formalism used here:
\begin{proof}
To ensure that no degeneracy occurs (that is $\left|E_{i}-E_{j}\right|\ne0$),
we need to show that solutions of $\varphi_{N+1}(E)=k\pi$ are simple.
It is sufficient to show that $\varphi_{N+1}'(E)\ne0$.

We saw that $\varphi'_{2}(E)>0$ consequently:

For $\varphi_{n}(E)\ne0,$ we have that: $\varphi'_{n+1}(E)=\frac{1+\frac{\varphi'_{n}}{\sin^{2}\varphi_{n}}}{1+(E-\varepsilon_{n}-\cot\varphi_{n})^{2}}>0$
(by induction assumption). \label{eq_8}

For $\varphi_{n}(E)=0$, $\varphi'_{n+1}(E)=\varphi'_{n}(E)>0$ (by
induction assumption).

Therefore $\varphi_{N+1}'(E)>0$ and there is no degeneracy.

Another way to ensure absence of degenerate eigenvalues of \textit{H}
is by successive use of \eref{eq_5}, and considering limits of $\varphi_{N+1}(E)$
at $E\to\pm\infty$: \begin{equation}
\begin{array}{ll}
\lim_{E\to-\infty}\varphi_{2}(E)=-\frac{\pi}{2} & \Rightarrow\lim_{E\to-\infty}\varphi_{3}(E)=-\frac{3\pi}{2}....\Rightarrow\\
 & \Rightarrow\lim_{E\to-\infty}\varphi_{N+1}(E)=-\left(2N-1\right)\frac{\pi}{2},\end{array}\end{equation}
 \begin{equation}
\lim_{E\to+\infty}\varphi_{2}(E)=\frac{\pi}{2}\Rightarrow\lim_{E\to+\infty}\varphi_{3}(E)=\frac{\pi}{2}....\Rightarrow\lim_{E\to+\infty}\varphi_{N+1}(E)=\frac{\pi}{2}\end{equation}
 By continuity and monotonicity of $\varphi_{N+1}(E),$ there exist
exactly \textit{N }different solutions of $\varphi_{N+1}(E)=k\pi$
for $E\in(-\infty,+\infty)$. \end{proof}
\begin{prop}
\label{prop3} The ratio of derivatives $\frac{\varphi'_{n+1}(E)}{\varphi'_{n}(E)}$
is bounded above.\end{prop}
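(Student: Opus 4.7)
My plan is to bypass the direct use of formula \eref{eq_6} (which is awkward because the $1/\sin^2\varphi_n$ factor and $(E - \varepsilon_n - \cot\varphi_n)^2$ factor can blow up simultaneously, so the natural rearrangement of the ratio is inconclusive) by first deriving a clean closed form for $\varphi'_n$ through a Wronskian, and then factoring the ratio into two pieces that are each controllable in terms of $W$ alone.

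Define $\mathcal{W}_n(E) := u'_n u_{n-1} - u_n u'_{n-1}$, where primes denote $\partial_E$. Differentiating \eref{eq_2} gives $u'_{n+1} = u_n + (E-\varepsilon_n) u'_n - u'_{n-1}$; substituting into $\mathcal{W}_{n+1} = u'_{n+1} u_n - u_{n+1} u'_n$, the terms proportional to $(E - \varepsilon_n)$ cancel, leaving $\mathcal{W}_{n+1} = \mathcal{W}_n + u_n^2$. Since $u_0 = 0$ and $u_1 = 1$ are independent of $E$, $\mathcal{W}_1 = 0$ and therefore $\mathcal{W}_{n+1} = \sum_{k=1}^n u_k^2$. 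Differentiating $\tan \varphi_n = u_n/u_{n-1}$ in parallel gives $\varphi'_n = \mathcal{W}_n/r_n^2$ with $r_n^2 := u_{n-1}^2 + u_n^2$, so
\begin{equation}
\varphi'_{n+1}(E) \; = \; \frac{\sum_{k=1}^n u_k^2}{u_n^2 + u_{n+1}^2}
\end{equation}
and hence
\begin{equation}
\frac{\varphi'_{n+1}(E)}{\varphi'_n(E)} \; = \; \left( 1 + \frac{u_n^2}{\sum_{k=1}^{n-1} u_k^2} \right) \cdot \frac{u_{n-1}^2 + u_n^2}{u_n^2 + u_{n+1}^2}.
\end{equation}

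Each factor is now bounded in terms of $W$ alone, provided $E$ lies in the spectrum of $H$ (so $|E| \le W+2$). For the second factor: the transfer matrix $T_n$ determined by $T_n (u_{n-1}, u_n)^{\mathrm{T}} = (u_n, u_{n+1})^{\mathrm{T}}$ has $\det T_n = 1$, so $r_{n+1} \ge r_n / \|T_n\|_{op}$ and therefore $r_n^2/r_{n+1}^2 \le \|T_n\|_{op}^2$; a direct diagonalisation of $T_n^{\mathrm{T}} T_n$ bounds this explicitly in terms of $W$. For the first factor: the recursion \eref{eq_2} combined with $|E - \varepsilon_n| \le W + 2$ and the elementary inequality $(a+b)^2 \le 2a^2 + 2b^2$ gives $u_n^2 \le 2(W+2)^2 u_{n-1}^2 + 2 u_{n-2}^2 \le 2(W+2)^2 (u_{n-1}^2 + u_{n-2}^2)$ for $n \ge 3$. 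Since $\sum_{k=1}^{n-1} u_k^2 \ge u_{n-1}^2 + u_{n-2}^2$, we obtain $u_n^2/\sum_{k=1}^{n-1} u_k^2 \le 2(W+2)^2$, with the case $n = 2$ handled directly from $u_2 = E - \varepsilon_1$ and $u_1 = 1$. Multiplying the two bounds produces the required uniform upper bound $\eta^{-1}(W) > 1$ on the ratio, with $\eta(W) \in (0,1)$ depending only on $W$.

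The main conceptual step is identifying the Wronskian closed form for $\varphi'_n$; once this is in hand, each ingredient bound is elementary. The only technical detail to verify is that both $\sum_{k=1}^{n-1} u_k^2$ and $u_n^2 + u_{n+1}^2$ are strictly positive (so the divisions are legitimate): the first holds because $u_1 = 1$ is independent of $E$, and the second because two consecutive $u_k$'s cannot both vanish (backward iteration of \eref{eq_2} would otherwise force $u_1 = 0$, contradicting the initial condition).
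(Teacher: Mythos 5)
Your argument is correct, and it reaches the conclusion by a genuinely different route from the paper. The paper works directly with the recursion \eref{eq_6}, bounds the first summand by $1$, and shows that the quantity $q=1-2(E-\varepsilon_{n})\sin\varphi_{n}\cos\varphi_{n}+(E-\varepsilon_{n})^{2}\sin^{2}\varphi_{n}$ multiplying $\varphi'_{n}$ in the denominator is bounded below by a positive constant $\eta(W)$, using positivity of $q$ plus compactness of the rectangle $\left\{ \left|E-\varepsilon_{n}\right|\le W+1\right\} \times\left\{ 0\le\varphi_{n}\le\pi\right\} $; this yields the affine recursion \eref{eq_15}, which is what is actually iterated in the proof of Theorem \ref{main_thm}. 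Your Wronskian identity $\varphi'_{n+1}=\sum_{k=1}^{n}u_{k}^{2}/(u_{n}^{2}+u_{n+1}^{2})$ is the standard Pr\"ufer-derivative formula and is secretly equivalent to \eref{eq_6}: writing $r_{n}^{2}=u_{n-1}^{2}+u_{n}^{2}$ and substituting $\sin\varphi_{n}=u_{n}/r_{n}$, $\cos\varphi_{n}=u_{n-1}/r_{n}$ into \eref{eq_qx} gives $q=\bigl(u_{n}^{2}+(xu_{n}-u_{n-1})^{2}\bigr)/r_{n}^{2}=r_{n+1}^{2}/r_{n}^{2}$, so the paper's $\eta(W)$ is exactly your worst-case transfer-matrix contraction $1/\Vert T_{n}\Vert^{2}$. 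What your factorisation buys: (i) it proves the literal statement of the proposition, namely a uniform bound on the ratio $\varphi'_{n+1}/\varphi'_{n}$, whereas the paper's argument establishes only the affine bound $\varphi'_{n+1}\le1+\varphi'_{n}/\eta$ (to bound the ratio itself one would additionally need a lower bound on $\varphi'_{n}$); (ii) the constant, $\bigl(1+2(W+2)^{2}\bigr)\bigl(2+(W+2)^{2}\bigr)$, is fully explicit with no compactness step and no separate case analysis at $\varphi_{n}=k\pi$, so the computation of \ref{A_eta} becomes unnecessary; (iii) combined with $\varphi'_{2}\le1$ it gives $\varphi'_{N+1}\le C(W)^{N-1}$ and hence the same exponential lower bound on the spacings, with a different but equally admissible explicit constant in Theorem \ref{main_thm}. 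Two small points to tidy: the identity $\varphi'_{n}=\mathcal{W}_{n}/r_{n}^{2}$ should also be verified at points where $u_{n-1}=0$ by differentiating $\cot\varphi_{n}=u_{n-1}/u_{n}$ instead of $\tan\varphi_{n}$ (it yields the same expression, and the differentiability of the chosen branch is Proposition \ref{prop1}); and your ratio bound applies only for $n\ge2$, since $\varphi_{1}\equiv\frac{\pi}{2}$ forces $\varphi'_{1}=0$ --- which is consistent with how the bound is used, starting the iteration from $\varphi'_{2}$.
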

\begin{proof}
By \eref{eq_6} : \begin{equation}
\begin{array}{l}
\varphi'_{n+1}(E)=\frac{d\varphi_{n+1}(E)}{dE}=\frac{1}{1+(E-\varepsilon_{n}-\cot\varphi_{n})^{2}}+\frac{\frac{\varphi'_{n}}{\sin^{2}\varphi_{n}}}{1+(E-\varepsilon_{n}-\cot\varphi_{n})^{2}}\le\\
\le1+\frac{\frac{\varphi'_{n}}{\sin^{2}\varphi_{n}}}{1+(E-\varepsilon_{n}-\cot\varphi_{n})^{2}}=1+\frac{\varphi'_{n}}{1-2(E-\varepsilon_{n})\sin\varphi_{n}\cos\varphi_{n}+(E-\varepsilon_{n})^{2}\sin^{2}\varphi_{n}}\end{array}\label{eq_9}\end{equation}
 It is left to find a lower bound for the denominator:\begin{equation}
\begin{array}{l}
q=1-2(E-\varepsilon_{n})\sin\varphi_{n}\cos\varphi_{n}+(E-\varepsilon_{n})^{2}\sin^{2}\varphi_{n}=\\
=1-(E-\varepsilon_{n})\sin2\varphi_{n}+(E-\varepsilon_{n})^{2}\sin^{2}\varphi_{n}.\end{array}\label{eq_10}\end{equation}
 For convenience, we define $x:=E-\varepsilon_{n},$ then: \begin{equation}
q(x,\varphi_{n})=1-2x\sin\varphi_{n}\cos\varphi_{n}+x^{2}\sin^{2}\varphi_{n}.\label{eq_qx}\end{equation}
 We are only interested in cases where $\left|x\right|=\left|E-\varepsilon_{n}\right|\le W+1$,
because otherwise \textit{E} is out of the spectrum interval, and
cannot be in an interval between any pair of eigenstates. Furthermore:
\eref{eq_10} is $\pi$-periodic in $\varphi_{n}$. Therefore, we
are looking at a bound on the \textit{q}, on the \textit{compact}
set (closed rectangle): \begin{equation}
A=\left\{ \left|x\right|\le W+1\right\} \times\left\{ 0\le\varphi_{n}\le\pi\right\} .\label{eq_defA}\end{equation}

Due to continuity of \textit{q} in both \textit{x} and $\varphi_{n}$,
it has minimum in $A$. Therefore, to show that $q$ is bounded away
from zero, it is sufficient to prove that $q$ is positive in $A$.
(the formal continuity of \textit{q} in arguments ($x,\varphi_{n}$)
is obvious, and not to be confused with continuity of functions $\varphi_{n}$,
$\varphi'_{n}$\textit{ }with respect to \textit{x }or \textit{E}).

Given the expression \eref{eq_qx} with fixed $\varphi_{n}$, it
can be evaluated as quadratic function in \textit{x }with minimum
value: \begin{equation}
\min_{x}\left\{ 1-2x\sin\varphi_{n}\cos\varphi_{n}+x^{2}\sin^{2}\varphi_{n}\right\} =\sin^{2}\varphi_{n}.\label{eq_17}\end{equation}
 Therefore $q(x,\varphi_{n})$ is positive for any $\varphi_{n}\ne k\pi,$
but also positive for $\varphi_{n}=k\pi,$ since $\varphi_{n}=k\pi\Rightarrow q=1$.
We hence have: \begin{equation}
\eta(W):=\min_{\left|E-\varepsilon_{n}\right|\le W+1,\varphi_{n}}\left\{ q\right\} >0.\label{eq_etadef}\end{equation}
 (Calculation of an analytic expression for $\eta$ is given in appendix
\ref{A_eta}. It demonstrates the dependence of the bound on $W$.)
Substituting into the derivative inequality \eref{eq_9}, we obtain
a recursive inequality:\begin{equation}
\varphi'_{n+1}(E)\le1+\frac{\varphi'_{n}}{\eta}.\label{eq_15}\end{equation}
 \end{proof}
\begin{rem}
No separate argument is required for $\varphi_{n}=k\pi$, since limit
\eref{eq_7} exists, and is a particular case of \eref{eq_9}.\end{rem}
\begin{proof}
{[} Proof of theorem \ref{main_thm} for Dirichlet b.c.{]}

Evaluating the inequality \eref{eq_15} recursively, from $\varphi'_{2}(E)\le1$
to $\varphi_{N+1}(E)$ \textit{(recall $\varphi_{2}(E)=\arctan(E-\varepsilon_{1})$)}
one obtains:

\begin{equation}
\begin{array}{l}
\varphi'_{N+1}(E)\le\underbrace{(1+\frac{1}{\eta}(1+\frac{1}{\eta}(1+\frac{1}{\eta}(...(1+\frac{1}{\eta}(1+\frac{1}{\eta}(1+\frac{\varphi'_{2}}{\eta}))...)}_{N-1}\\
\le\underbrace{(1+\frac{1}{\eta}(1+\frac{1}{\eta}(1+\frac{1}{\eta}(...(1+\frac{1}{\eta}(1+\frac{1}{\eta}(1+\frac{1}{\eta})))...)}_{N-1}=\frac{\eta^{-N}-1}{\eta^{-1}-1}\end{array}\label{eq_16}\end{equation}
 Since no degeneracy occurs (By proposition \ref{prop2}) and since
eigenvalues satisfy \eref{eq_3}, and the derivative $\varphi'_{N+1}(E)$
satisfies \eref{eq_16} for any $-1\le E\le W+1$, we have:

$\left|E_{i}-E_{j}\right|\ge\frac{\pi(\eta^{-1}(W)-1)}{\eta^{-N}(W)-1}$
for any pair of eigenstates $E_{i},E_{j}$.
\end{proof}

\subsubsection{Proof for Neumann boundary conditions}

Neumann b.c. imply: \begin{equation}
u_{1}=u_{0}\ne0\Rightarrow\varphi_{1}(E)\equiv\frac{\pi}{4}.\label{eq_phiNeu}\end{equation}
 The \eref{eq_3} is modified to be: \begin{equation}
\varphi_{N+1}=(k+1/4)\pi\Leftrightarrow u_{N+1}=u_{N}\Leftrightarrow E-eigenvalue.\label{eq_phiNeu}\end{equation}
 The derivative $\varphi_{2}(E)'>0,$ and therefore proposition \ref{prop1}
holds. Limits in proposition \ref{prop2} hold starting from $\varphi_{2}(E)$.
Proposition \ref{prop2} holds without modifications. Thus theorem
\ref{main_thm} holds for Neumann b.c. as well.

\subsubsection{Simplicity of the spectrum for general boundary conditions}
\begin{thm}
\label{simplicity_thm} Let $H$ be defined as in theorem \ref{main_thm}
with boundary conditions such that the normalization determines the
value of the eigenfunction at two adjacent points. Then the spectrum
of $H$ is simple. \end{thm}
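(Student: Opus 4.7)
The plan is to exploit the second-order nature of the eigenvalue equation. The relation $Hu = Eu$ rewrites as the three-term recurrence
\[
u_{n+1} = (E-\varepsilon_n)u_n - u_{n-1},
\]
which is a linear second-order recursion in $n$. Consequently, for each fixed $E$, the space of formal sequences satisfying this recurrence (without imposing any boundary constraint) is exactly two-dimensional: given the pair of values $(u_m, u_{m+1})$ at any two adjacent indices, forward iteration of the recurrence determines $u_k$ for all $k > m+1$, while the backward form $u_{n-1} = (E-\varepsilon_n)u_n - u_{n+1}$ determines $u_k$ for all $k < m$. So knowledge of an eigenfunction at any two adjacent sites determines it everywhere.

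With this rigidity in hand the theorem follows almost immediately from the hypothesis. By assumption, the boundary conditions together with a choice of normalization pin down the eigenfunction's values at two adjacent sites. Hence, for every candidate eigenvalue $E$, any normalized eigenfunction of $H$ at that energy is uniquely determined from those two adjacent values by the two-sided recurrence above. The eigenspace of $E$ therefore has dimension at most one, which is precisely the statement that the spectrum of $H$ is simple.

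There is no genuine obstacle here: the entire content sits in the two-dimensionality of the solution space of the recursion, which is a standard fact about three-term recurrences. The only small thing worth verifying is that the hypothesis on the boundary conditions is actually satisfied in the cases of interest. For Dirichlet this is clear, since $u_0 = 0$ and normalization fixes $u_1$ up to sign, so the adjacent pair $(u_0, u_1)$ is determined; for Neumann, $u_0 = u_1$ and normalization fixes their common value, again determining $(u_0, u_1)$. The general statement of Theorem~\ref{simplicity_thm} is phrased precisely so that this identification of two adjacent values is all that is needed as input.
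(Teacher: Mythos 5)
Your proof is correct and follows essentially the same route as the paper: both arguments normalize so that the eigenfunction's values at two adjacent sites are fixed and then invoke the three-term recurrence $u_{n+1}=(E-\varepsilon_n)u_n-u_{n-1}$ (forward and backward) to conclude the whole eigenfunction is determined, so the eigenspace is at most one-dimensional. Your added verification that Dirichlet and Neumann boundary conditions satisfy the hypothesis is a harmless elaboration of what the paper leaves implicit.
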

\begin{proof}
Let $\vec{v}$ and $\vec{u}$ be eigenvectors of $H$ with eigenvalue
$E_{0}$. Without loss of generality $\vec{v}$ and $\vec{u}$, can
be normalized so that $u_{1}=v_{1}=a$ and $u_{0}=v_{0}=b$. But form
this point all the remaining elements of $\vec{v}$ and $\vec{u}$
can be determined using \eref{eq_2}. Therefore $\vec{v}\equiv\vec{u}$.
\end{proof}

\subsection{Remarks on level repulsion}
\begin{enumerate}
\item As expected, the limit on spacing vanishes when \textit{W }approaches
infinity. That is, when nearly infinite barriers are allowed (See
\eref{eq_13}, \eref{eq_14}).
\item The typical sensitivity to the energy of the angle variable $\varphi_{N+1}$
is exponentially large in $N$ at the proximity of an eigenvalue (See
appendix \ref{A_phi})
\item The result does not hold for periodic boundary conditions. For periodic
boundary conditions, degeneracy might occur, and there is \textit{no}
lower bound on the energy spacing between non-degenerate states. For
example consider $\varepsilon_{j}\equiv0$. The eigenvalues \[
\begin{array}{cc}
E_{j}=2\cos\frac{2\pi j}{N}\qquad & j=1..N\end{array}\]
 are pair wise degenerate, except \textit{$j=N$ }and $j=N/2$\textit{
}(in case of even \textit{$N$}).
\item This work does not prove that the proposed limit is optimal. However,
when considering exponential localization in disordered potentials,
it appears that optimal bound on inter-level spacings is indeed exponential
in chain length $N$.
\end{enumerate}

\section{Applications}

\subsection{Bound on first order term in perturbation theory}

In this Section we will show some applications of Theorem \ref{main_thm}.
Our main interest will be in problems related to the still open question
of whether there is localization for the Anderson model perturbed
by a small nonlinearity. The numerical results so far are inconclusive,
see e.g. \cite{Molina1998,Kopidakis2008,Pikovsky2008,Flach2009,Skokos2009}.
The only rigorous result that applies to the full nonlinear system,
is the finite time result of \cite{WZ}. In recent works \cite{FKS1,FKS2,FKS3},
we developed a renormalized perturbation expansion for the nonlinear
problem. The first order term, can now be controlled rigorously, as
we will show now. The correction is given by the following term :
(see \eref{eq:intr_psi1} and \eref{eq:intr_c1}),

\begin{equation}
c_{n}^{\left(1\right)}=V_{n}^{000}\left(\frac{1-e^{i\left(E_{n}-E_{0}\right)t}}{E_{n}-E_{0}}\right)\label{eq:c1}\end{equation}
 where,\begin{equation}
V_{n}^{000}=\sum_{y}u_{n}\left(y\right)u_{0}^{3}\left(y\right),\label{eq:Vn000}\end{equation}
 We define,\begin{equation}
\psi^{\left(1\right)}\left(x,t\right)=\sum_{n}c_{n}^{\left(1\right)}\left(t\right)u_{n}\left(x\right).\end{equation}
This is the correction to the wave function in first order perturbation
theory \cite{FKS1} (in this section lattice sites are denoted by
$x$ and $y$, while eigenstates by $n$). Following \cite{AM} we
are interested in bounding the fractional moments of $\left|\psi^{\left(1\right)}\right|$,
\begin{equation}
\left|\psi^{\left(1\right)}\left(x,t\right)\right|^{s}\leq\sum_{n}\left|c_{n}^{\left(1\right)}\left(t\right)u_{n}\left(x\right)\right|^{s},\end{equation}
 for $0<s<1$. Namely,\begin{eqnarray}
\left\langle \left|\psi^{\left(1\right)}\left(x,t\right)\right|^{s}\right\rangle  & \leq\sum_{n}\langle\left|c_{n}^{\left(1\right)}\left(t\right)u_{n}\left(x\right)\right|^{s}\rangle\nonumber\\
 & \leq\sum_{y,n}\int d\mu\left(\omega\right)\left|\frac{u_{n}\left(x\right)u_{n}\left(y\right)u_{0}^{3}\left(y\right)}{E_{n}-E_{0}}\right|^{s},\label{eq:psi1_initial_bound}\end{eqnarray}
 where $d\mu\left(\omega\right)$ is the measure which is defined
on the random potentials.
\begin{lem}
\label{lem:psi_aprio_bnd}For any Hamiltonian in a finite box $\Lambda$
of size $N,$ with orthonormal eigenfunctions, $\sum_{x}u_{n}\left(x\right)u_{m}\left(x\right)=\delta_{n,m}$.
For any $x,$ \begin{equation}
\exists n,\qquad\left|u_{n}\left(x\right)\right|\geq N^{-\nicefrac{1}{2}}.\end{equation}
 \end{lem}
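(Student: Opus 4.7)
The plan is to reduce the statement to a one-line pigeonhole argument after invoking the completeness of the eigenbasis. Since $H$ acts on $\ell^2(\Lambda)$ with $|\Lambda|=N$, and $\{u_n\}$ is an orthonormal family of eigenfunctions, there are exactly $N$ of them and they form an orthonormal basis of the $N$-dimensional Hilbert space $\ell^2(\Lambda)$.

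First I would expand the standard basis vector $\delta_x$ (the indicator of the site $x$) in the eigenbasis. Completeness gives
\begin{equation}
\delta_x = \sum_n \langle u_n, \delta_x\rangle\, u_n = \sum_n u_n(x)\, u_n,
\end{equation}
and Parseval's identity then yields
\begin{equation}
1 = \|\delta_x\|^2 = \sum_{n=1}^{N} |u_n(x)|^2.
\end{equation}
Since this is a sum of $N$ nonnegative terms equal to $1$, at least one index $n$ must satisfy $|u_n(x)|^2 \geq N^{-1}$, which gives the claim.

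There is essentially no obstacle: the only thing to verify is that the eigenfunctions really do form a basis, which follows because $H$ is a self-adjoint (in fact real symmetric tridiagonal) operator on the finite-dimensional space $\ell^2(\Lambda)$, and by the spectral theorem such an operator admits an orthonormal eigenbasis. Note that the admissible boundary conditions in Theorem~\ref{main_thm} (Dirichlet or Neumann) preserve self-adjointness, so this applies in the setting relevant to the rest of the paper. The only subtlety worth flagging is that, in the Dirichlet case, $u_n(0)=u_n(N+1)=0$ are boundary values and the inner product runs over the $N$ interior sites; the orthonormality relation $\sum_x u_n(x)u_m(x)=\delta_{n,m}$ assumed in the lemma is precisely this interior inner product, so the identity $\sum_n |u_n(x)|^2=1$ holds for every interior $x$ without modification.
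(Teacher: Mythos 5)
Your proof is correct and follows essentially the same route as the paper's: both rest on the completeness identity $\sum_{n}\left|u_{n}\left(x\right)\right|^{2}=1$ and then extract one large term (you by pigeonhole, the paper by contradiction). Your version is slightly more careful in justifying that completeness via Parseval and self-adjointness, but the substance is identical.
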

\begin{proof}
Assume that $\exists x$ such that $\forall n$, $\left|u_{n}\left(x\right)\right|<N^{-\nicefrac{1}{2}}.$
Due to orthonormality of the eigenfunctions of a Hamiltonian,\[
1=\sum_{n}\left|u_{n}\left(x\right)\right|^{2}<N^{-1}\sum_{n}1=1,\]
 which is in a contradiction to the assumption.\end{proof}
\begin{thm}
\label{thm:psi1_bound}For the one-dimensional Anderson model, and
for $0<s<\frac{1}{5}$,\begin{equation}
\left\langle \left|\psi^{\left(1\right)}\left(x,t\right)\right|^{s}\right\rangle \leq C_{s}N^{\nicefrac{9}{4}-\nicefrac{s}{2}}e^{-\mu s\left|x\right|},\end{equation}
 where $C_{s}$ is a constant which depends only on $s$.\end{thm}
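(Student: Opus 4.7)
The plan is to combine the deterministic level-repulsion estimate of Theorem \ref{main_thm} with Minami's probabilistic bound \eref{eq:Minami0} and the Aizenman fractional-moment bound \eref{eq:Aizenman}, through a dyadic partition of the disorder space as outlined in the introduction. Starting from the term-wise inequality \eref{eq:psi1_initial_bound}, I will partition $\Omega=\bigcup_{m\geq 0}\Omega_m$ (disjointly) with
\[
\Omega_m=\{\omega:s_{\min}(\omega)\in[2^{-m-1},2^{-m})\},\qquad s_{\min}(\omega):=\min_{i\neq j}|E_i(\omega)-E_j(\omega)|.
\]
Theorem \ref{main_thm} forces $s_{\min}\geq\pi(\eta^{-1}-1)/(\eta^{-N}-1)$, so only $0\leq m\leq M=\bar{b}N+O(1)$ with $\bar{b}=b/\ln 2$ contribute non-trivially. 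On $\Omega_m$ the singular denominator is uniformly controlled: $|E_n-E_0|^{-s}\leq 2^{s(m+1)}$ for all $n\neq 0$. Covering the spectrum $[-2-W,2+W]$ by $O(2^m)$ sub-intervals of length $2\cdot 2^{-m}$ and applying \eref{eq:Minami0} to each yields $\Pr(\Omega_m)\leq C_\rho N^2\cdot 2^{-m}$.

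For each fixed $m$, $n$ and $y$ I will apply H\"older's inequality with conjugate exponents $(1/s,1/(1-s))$ to the $\omega$-integral,
\[
\int_{\Omega_m}|u_n(x)u_n(y)u_0^3(y)|^s\,d\mu(\omega)\leq\left(\int|u_n(x)u_n(y)u_0^3(y)|\,d\mu(\omega)\right)^s\Pr(\Omega_m)^{1-s},
\]
and bound the unconstrained integral using Aizenman \eref{eq:Aizenman}: since $|u_0(y)|^3\leq 1$ and $|u_n(x)u_n(y)|\leq\sum_k|u_k(x)u_k(y)|$ pointwise in $\omega$, it is at most $De^{-\mu|x-y|}$. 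The contribution of $\Omega_m$ to the $(n,y)$-term is thus bounded by $2^{s(m+1)}D^s e^{-\mu s|x-y|}(C_\rho N^2\cdot 2^{-m})^{1-s}$; summing the resulting geometric series $\sum_m 2^{m(2s-1)+s}$ from $m=0$ to $M$ yields a constant provided $s<1/2$, and otherwise a factor $e^{(2s-1)bN}$ which spoils the polynomial bound.

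The announced decay $e^{-\mu s|x|}$ -- as opposed to the weaker $e^{-\mu s|x-y|}$ obtained above, which averages to an $x$-independent constant when summed over $y$ -- will be recovered by refining the crude estimate $|u_0(y)|^3\leq 1$. Writing $|u_0(y)|^3=|u_0(y)|\cdot|u_0(y)|^2$ and using $|u_0(y)|^2$ as a probability weight concentrated near the random localization centre of $u_0$ transfers the $|x-y|$-decay into $|x|$-decay after summation over $y$; equivalently, Aizenman applied to $u_0$ itself supplies $\langle|u_0(x)u_0(y)|\rangle\leq De^{-\mu|x-y|}$, which re-centres the exponential at $x$. The polynomial factor $N^{9/4-s/2}$ will then be assembled from $\Pr(\Omega_m)^{1-s}=O(N^{2(1-s)})$ via Minami, together with sub-linear $N$-factors produced by H\"older applied to the $(n,y)$-sum and by the a priori control of Lemma \ref{lem:psi_aprio_bnd}, which permits an interpolation between the above fractional-moment estimate and an ordinary-moment bound polynomial in $N$.

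The main obstacle is to render these two refinements compatible: extracting the $|x|$-decay through the $|u_0|^2$ weight while simultaneously reducing the $N$-power from the naive $N^{3-2s}$ down to $N^{9/4-s/2}$ both require H\"older-type splittings whose exponents must be calibrated against the geometric sum in $m$ and against one another. The restriction $s<1/5$ should emerge precisely from the joint requirement that the $m$-sum converge, that the interpolation between the fractional and ordinary moments be non-trivial, and that the accumulated powers of $N$ -- those from Minami, from the sum over $(n,y)$, and from Lemma \ref{lem:psi_aprio_bnd} -- sum to exactly the stated exponent $9/4-s/2$; for larger $s$ at least one of these constraints fails and the estimate degrades to an exponential in $N$.
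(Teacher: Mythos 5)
Your overall skeleton is the right one and matches the paper's: a dyadic decomposition of the disorder space, truncation of the $m$-sum at $M=\bar bN$ via Theorem \ref{main_thm}, Minami \eref{eq:Minami0} for the probability of a small gap, Aizenman \eref{eq:Aizenman} for the eigenfunction correlators, and H\"older to glue them. (Your choice to decompose on the global minimal gap rather than on $|E_n-E_0|$ for each $n$ separately, as the paper does, is an inessential variant.) However, the proof as written does not reach the stated conclusion, and the step you defer to the last paragraph is precisely the content of the theorem. With your two-way H\"older split $(1/s,1/(1-s))$ and the crude bound $|u_0(y)|^3\le 1$, the unconstrained integral loses all decay in $y$; after summing over $y$ the exponential $e^{-\mu s|x-y|}$ averages out and you get no $e^{-\mu s|x|}$ factor at all. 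Moreover your $m$-sum has ratio $2^{2s-1}$ and converges for all $s<1/2$, so the threshold $s<1/5$ never appears in your scheme --- a sign that the decisive exponent bookkeeping has not been done. Your proposed repair (``use $|u_0(y)|^2$ as a probability weight'' / ``apply Aizenman to $u_0$ itself'') cannot be executed as stated, because Aizenman's bound controls $\left\langle\sum_n|u_n(x)u_n(y)|\right\rangle$, a single averaged correlator; to extract \emph{simultaneously} a decay $e^{-\mu s|x-y|}$ from the $u_n$ pair and a decay in $|y|$ from the $u_0$ factor inside one $\omega$-integral you must first separate them by H\"older, which is exactly what your two-exponent split fails to do.

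The paper's device, which is the missing idea, is a \emph{three}-way H\"older inequality applied to $\chi(\mathcal V_n(m))^{p_1}$, $|u_n(x)u_n(y)|^{sp_2}$ and $|u_0(y)u_0(0)u_0(0)^{-1}|^{3sp_3}$ with $1/p_2=s$, $1/p_3=3s$, $1/p_1=1-4s$ (forcing $s<1/4$ already). Then each factor is an integral of a \emph{first} power: the second is controlled by Aizenman giving $e^{-\mu s|x-y|}$, the third by Aizenman applied to $|u_0(y)u_0(0)|$ together with Lemma \ref{lem:psi_aprio_bnd} for $|u_0(0)|^{-1}\le N^{1/2}$, giving $N^{3s/2}e^{-3\mu s|y|}$, and the first by Minami giving $(\pi\|\rho\|_\infty N)^{2(1-4s)}2^{-(m+1)(1-4s)}$. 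The $m$-sum then has ratio $2^{s-(1-4s)}=2^{5s-1}$, which is where --- and only where --- the condition $s<1/5$ comes from; the powers of $N$ assemble from $N^{2(1-4s)}$ and $N^{3s/2}$ (plus the count over $n$). Finally, note that the interpolation with the ordinary ($s=1$) moment that you invoke is not part of this proof at all; it belongs to a later corollary of the paper, and importing it here conflates two separate results.
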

\begin{proof}
Lets define a set of potentials with the help of the dyadic decomposition,
\begin{equation}
\mathcal{V}_{n}\left(m\right)=\left\{ \omega|\left|E_{n}-E_{0}\right|\in I_{m}\right\} ,\label{eq:dyadic_decomp}\end{equation}
 where,\begin{equation}
I_{m}\equiv\left[2^{-m-1},2^{-m}\right].\end{equation}
 The denominators of \eref{eq:psi1_initial_bound} cannot be arbitrarily
small by Theorem \ref{main_thm} (eigenvalue repulsion),\begin{equation}
\left|E_{n}-E_{0}\right|\geq Ce^{-bN},\label{eq:eig_min_space}\end{equation}
 where $C$ and $b$ are constants. Therefore combining the decomposition
\eref{eq:dyadic_decomp} with \eref{eq:psi1_initial_bound} and
\eref{eq:eig_min_space} yields, \begin{equation}\fl
\left\langle \left|\psi^{\left(1\right)}\left(x,t\right)\right|^{s}\right\rangle \leq\sum_{m=0}^{M}2^{s\left(m+1\right)}\sum_{y,n}d\mu\left(\omega\right)\chi\left(\mathcal{V}_{n}\left(m\right)\right)\left|u_{n}\left(x\right)u_{n}\left(y\right)u_{0}^{3}\left(y\right)\right|^{s},\end{equation}
 with,\begin{equation}
\frac{1}{\ln2}\left(b N-\ln C\right)\leq \bar{b}N \equiv M,\label{eq:M_def}\end{equation}
 and $\chi\left(\mathcal{V}_{m}\left(n\right)\right)$ is the characteristic
function of the set of potentials, $\mathcal{V}_{m}\left(n\right)$.
Applying the generalized H\"older inequality one finds,\begin{eqnarray}
\left\langle \left|\psi^{\left(1\right)}\left(x,t\right)\right|^{s}\right\rangle  & \leq\sum_{n,y}\sum_{m=0}^{M}2^{s\left(m+1\right)}\left(\int d\mu\left(\omega\right)\chi\left(\mathcal{V}_{m}\left(n\right)\right)^{p_{1}}\right)^{\nicefrac{1}{p_{1}}},\label{eq:psi1_1}\nonumber\\
 & \times\left(\int d\mu\left(\omega\right)\left|u_{n}\left(x\right)u_{n}\left(y\right)\right|^{sp_{2}}\right)^{\nicefrac{1}{p_{2}}}\nonumber \\
 & \times\left(\int d\mu\left(\omega\right)\left|u_{0}\left(y\right)u_{0}\left(0\right)u_{0}^{-1}\left(0\right)\right|^{3sp_{3}}\right)^{\nicefrac{1}{p_{3}}}\end{eqnarray}
 with $\frac{1}{p_{1}}+\frac{1}{p_{2}}+\frac{1}{p_{3}}=1$. To estimate,
\begin{equation}
J_{1}=\left(\int d\mu\left(\omega\right)\chi\left(\mathcal{V}_{m}\left(n\right)\right)^{p_{1}}\right)^{\nicefrac{1}{p_{1}}},\end{equation}
 we use the fact that $\chi^{p_{1}}=\chi$ and Minami estimate for
the probability to find at least two eigenvalues in an interval $I$
(see \eref{eq:Minami0}) \cite{Mi}\begin{equation}
\Pr\left(TrP_{H_{\omega}^{\Lambda}}^{(\Lambda)}\left(I\right)\geq2\right)\leq\left(\pi\left\Vert \rho\right\Vert _{\infty}I\, N\right)^{2}.\label{eq:Minami}\end{equation}
 Since, we are not interested in a particular energy interval, we
will cover the energy band with $I^{-1}$ intervals of size $I=\left|I_{m}\right|=2^{-m-1}$,
which gives,\begin{equation}
J_{1}\leq\left(\pi\left\Vert \rho\right\Vert _{\infty}N\right)^{\nicefrac{2}{p_{1}}}2^{-\nicefrac{\left(m+1\right)}{p_{1}}}.\label{eq:J1_bound}\end{equation}
 To bound,\begin{equation}
J_{2}=\left(\int d\mu\left(\omega\right)\left|u_{n}\left(x\right)u_{n}\left(y\right)\right|^{sp_{2}}\right)^{\nicefrac{1}{p_{2}}},\end{equation}
 and \begin{equation}
J_{3}=\left(\int d\mu\left(\omega\right)\left|u_{0}\left(y\right)u_{0}\left(0\right)u_{0}^{-1}\left(0\right)\right|^{3sp_{3}}\right)^{\nicefrac{1}{p_{3}}},\end{equation}
 we choose, $1/p_{2}=s$ and $1/p_{3}=3s$, which sets, $1/p_{1}=1-4s$
and $s<\frac{1}{4}$. Then, we proceed by combining Lemma \ref{lem:psi_aprio_bnd}
to bound, $\left|u_{0}\left(0\right)\right|^{-1}$, with the result
of Aizenman \cite{A},\begin{equation}
\sum_{n}\left(\int d\mu\left(\omega\right)\left|u_{n}\left(x\right)u_{n}\left(y\right)\right|\right)\leq De^{-\mu\left|x-y\right|},\label{eq:Aizenman1}\end{equation}
 where $\mu>0$, and $D>0$ are some constants. This yields,\begin{equation}
J_{2}\leq D^{s}e^{-\mu s\left|x-y\right|},\label{eq:J2_bound}\end{equation}
 and\begin{equation}
J_{3}\leq N^{\nicefrac{3s}{2}}D^{3s}e^{-3\mu s\left|y\right|}.\label{eq:J3_bound}\end{equation}
 Plugging \eref{eq:J1_bound}, \eref{eq:J2_bound} and \eref{eq:J3_bound}
back into \eref{eq:psi1_1} gives,\begin{equation}
\fl\left\langle \left|\psi^{\left(1\right)}\left(x,t\right)\right|^{s}\right\rangle \leq D^{4s}\left(\pi\left\Vert \rho\right\Vert _{\infty}\right)^{2\left(1-4s\right)}N^{3-\nicefrac{13s}{2}}\left(\sum_{m=0}^{M}2^{\left(5s-1\right)\left(m+1\right)}\right)\sum_{y}e^{-\mu s\left|x-y\right|}e^{-3\mu s\left|y\right|}.\label{eq:psi1_2}\end{equation}
 Setting $s<\frac{1}{5}$, and using the triangle inequality for the
last sum, we get, \begin{equation}
\sum_{m=0}^{M}2^{\left(5s-1\right)\left(m+1\right)}<\frac{1}{1-2^{\left(5s-1\right)}},\label{eq:sum_bound}\end{equation}
 and,\begin{equation}
\sum_{y}e^{-\mu s\left|x-y\right|}e^{-3\mu s\left|y\right|}\leq e^{-\mu s\left|x\right|}\sum_{y}e^{-2\mu s\left|y\right|}\leq e^{-\mu s\left|x\right|}\frac{1}{1-e^{-2\mu s}}.\end{equation}
 Therefore for $0<s<\frac{1}{5}$,\begin{eqnarray}
\left\langle \left|\psi^{\left(1\right)}\left(x,t\right)\right|^{s}\right\rangle  & \leq C_{s,\delta} & N^{3-\nicefrac{13s}{2}}e^{-\mu s\left|x\right|},\label{eq:psi1_almost_final_bound}\end{eqnarray}
 with,\[
C_{s,\delta}=\frac{D^{4s}\left(\pi\left\Vert \rho\right\Vert _{\infty}\right)^{2\left(1-4s\right)}}{\left(1-2^{\left(5s-1\right)}\right)\left(1-e^{-2\mu s}\right)}.\]
 \end{proof}
\begin{cor}
For $\nu>0$ and $0<s<\frac{1}{5}$,\begin{equation}
\Pr\left(\left|\psi^{\left(1\right)}\left(x,t\right)\right|\geq C_{s}^{\nicefrac{1}{s}}N^{\nicefrac{3}{s}-\nicefrac{13}{2}}e^{-\left(\mu-\nu/s\right)\left|x\right|}\right)\leq e^{-\nu\left|x\right|}.\end{equation}
 \end{cor}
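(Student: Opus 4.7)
The plan is to deduce the corollary from Theorem \ref{thm:psi1_bound} by a one-line application of Markov's inequality to the nonnegative random variable $\left|\psi^{(1)}(x,t)\right|^{s}$. Since $s>0$, the event $\{|\psi^{(1)}|\ge a\}$ coincides with $\{|\psi^{(1)}|^{s}\ge a^{s}\}$, so for any threshold $a>0$,
\begin{equation}
\Pr\!\left(\left|\psi^{(1)}(x,t)\right|\ge a\right)=\Pr\!\left(\left|\psi^{(1)}(x,t)\right|^{s}\ge a^{s}\right)\le \frac{\left\langle \left|\psi^{(1)}(x,t)\right|^{s}\right\rangle}{a^{s}}.
\end{equation}
By Theorem \ref{thm:psi1_bound}, for $0<s<\tfrac{1}{5}$ the numerator is bounded by $C_{s}N^{3-13s/2}e^{-\mu s|x|}$ (as derived in \eref{eq:psi1_almost_final_bound}).

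Next I would choose $a$ so that the resulting ratio is exactly $e^{-\nu|x|}$. Setting
\begin{equation}
a=C_{s}^{1/s}N^{3/s-13/2}e^{-(\mu-\nu/s)|x|},
\end{equation}
we have $a^{s}=C_{s}N^{3-13s/2}e^{-(\mu s-\nu)|x|}$, and substituting into the Markov bound gives
\begin{equation}
\Pr\!\left(\left|\psi^{(1)}(x,t)\right|\ge a\right)\le \frac{C_{s}N^{3-13s/2}e^{-\mu s|x|}}{C_{s}N^{3-13s/2}e^{-(\mu s-\nu)|x|}}=e^{-\nu|x|},
\end{equation}
which is precisely the claim.

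There is no real obstacle here: the entire content is absorbed into Theorem \ref{thm:psi1_bound}, and the corollary is simply the standard passage from a fractional-moment bound to a probabilistic tail bound. The only small subtlety is keeping the restriction $0<s<\tfrac{1}{5}$ so that the hypothesis of Theorem \ref{thm:psi1_bound} applies, and the admissibility condition $\nu>0$ (which guarantees that the bound $e^{-\nu|x|}$ is nontrivial for $|x|\ge 1$, and that the effective decay rate $\mu-\nu/s$ in the threshold is still meaningful whenever $\nu<\mu s$).
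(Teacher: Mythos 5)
Your proposal is correct and is essentially identical to the paper's own proof: both apply the Markov/Chebyshev inequality to the nonnegative variable $\left|\psi^{(1)}(x,t)\right|^{s}$ using the fractional-moment bound of Theorem \ref{thm:psi1_bound}, and choose the threshold so that the resulting ratio is exactly $e^{-\nu\left|x\right|}$. In fact your choice of threshold is written more carefully than the paper's intermediate display, which contains a small typo ($e^{-\left(\mu-\nu\right)s\left|x\right|}$ where $e^{-\left(\mu s-\nu\right)\left|x\right|}$ is meant), though the paper's final conclusion agrees with yours.
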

\begin{proof}
Using Chebychev inequality,\begin{equation}
\Pr\left(\left|\psi^{\left(1\right)}\left(x,t\right)\right|^{s}\geq A\right)\leq A^{-1}C_{s}N^{3-\nicefrac{13s}{2}}e^{-\mu s\left|x\right|},\end{equation}
 and choosing,\begin{equation}
A=C_{s}N^{3-\nicefrac{13s}{2}}e^{-\left(\mu-\nu\right)s\left|x\right|},\end{equation}
 gives,\begin{equation}
\Pr\left(\left|\psi^{\left(1\right)}\left(x,t\right)\right|^{s}\geq C_{s}N^{3-\nicefrac{13s}{2}}e^{-\left(\mu s-\nu\right)\left|x\right|}\right)\leq e^{-\nu\left|x\right|},\end{equation}
 or\begin{equation}
\Pr\left(\left|\psi^{\left(1\right)}\left(x,t\right)\right|\geq C_{s}^{\nicefrac{1}{s}}N^{\nicefrac{3}{s}-\nicefrac{13}{2}}e^{-\left(\mu-\nu/s\right)\left|x\right|}\right)\leq e^{-\nu\left|x\right|}.\end{equation}

\end{proof}

\subsection{Bound on the derivative of an eigenfunction}

An important object in the study of the properties of eigenfunctions,
is the sensitivity to a change of the potential at some point of an
eigenfunction. We have by direct computation,\begin{equation}
\frac{\partial u_{0}\left(x\right)}{\partial\varepsilon_{y}}=u_{0}\left(y\right)\sum_{n\ne0}\frac{u_{n}\left(x\right)u_{n}\left(y\right)}{E_{0}-E_{n}}.\end{equation}
 The above analysis could be extended to obtain bounds on this derivative.
\begin{thm}
\label{thm:derivative_s_1_3}For a one-dimensional Anderson problem,
and $0<s<\frac{1}{3}$,\begin{equation}
E_{s}\equiv\left\langle \left|\frac{\partial u_{0}\left(x\right)}{\partial\varepsilon_{y}}\right|^{s}\right\rangle \leq K_{s}N^{3-\nicefrac{7s}{2}}e^{-\mu s\left|x-y\right|}e^{-\mu s\left|y\right|}.\end{equation}
 \end{thm}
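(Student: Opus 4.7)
I would mimic the architecture of the proof of Theorem \ref{thm:psi1_bound}, exploiting the fact that the summand now contains only three eigenfunction factors (one copy of $u_0$ instead of three) and no internal sum over $y$. First I would apply the subadditivity inequality $|\sum a_n|^s \leq \sum |a_n|^s$, valid for $0 < s < 1$, inside the explicit formula for the derivative, then take the disorder average to obtain
\begin{equation*}
E_s \leq \sum_{n \ne 0} \int d\mu(\omega)\, \frac{|u_0(y) u_n(x) u_n(y)|^s}{|E_0 - E_n|^s}.
\end{equation*}
Next I would dyadically decompose the potential space using $\mathcal{V}_n(m) = \{\omega : |E_n - E_0| \in I_m\}$ with $I_m = [2^{-m-1}, 2^{-m}]$, and invoke the eigenvalue repulsion bound (Theorem \ref{main_thm}) to truncate the $m$-sum at $m \leq M \equiv \bar b N$, exactly as in \eref{eq:M_def}.

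On each dyadic shell I would apply the generalized H\"older inequality with exponents $1/p_1 + 1/p_2 + 1/p_3 = 1$, producing three factors analogous to those in Theorem \ref{thm:psi1_bound}: $J_1$ isolating $\chi(\mathcal{V}_n(m))$, $J_2$ isolating $|u_n(x) u_n(y)|$, and $J_3$ isolating $|u_0(y) u_0(0) u_0^{-1}(0)|$ after inserting the factor $u_0(0)/u_0(0) = 1$. The natural choices are $1/p_2 = 1/p_3 = s$, which make the exponents inside $J_2, J_3$ equal to unity so that Aizenman's bound \eref{eq:Aizenman1} can be applied; this forces $1/p_1 = 1 - 2s$, which is admissible for $s < 1/2$.

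Minami's estimate \eref{eq:Minami}, applied to a cover of the spectrum by $\lceil |I_m|^{-1}\rceil$ intervals, yields $J_1 \leq (\pi \|\rho\|_\infty N)^{2(1-2s)} \, 2^{-(1-2s)(m+1)}$; the Aizenman bound restricted to the single term $n = 0$ combined with the a priori bound $|u_0(0)|^{-1} \leq N^{1/2}$ from Lemma \ref{lem:psi_aprio_bnd} gives $J_3 \leq N^{s/2} D^s e^{-\mu s |y|}$; and the summation $\sum_{n} J_2$ is handled using the full Aizenman bound $\sum_n \int |u_n(x) u_n(y)| d\mu \leq D e^{-\mu|x-y|}$, which delivers the exponential factor $e^{-\mu s|x-y|}$ and a power of $N$ identical to the one obtained in Theorem \ref{thm:psi1_bound}. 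The remaining geometric series becomes $\sum_{m=0}^{M} 2^{(3s - 1)(m+1)}$, and its convergence uniformly in $N$ is the mechanism that fixes the allowed range to $s < 1/3$, precisely as in the statement.

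Collecting the $N$-powers from $J_1$ ($N^{2(1-2s)}$), from $J_3$ ($N^{s/2}$) and from the $n$-summation in $J_2$ closes the estimate at $N^{3 - 7s/2}$; a quick cross-check against Theorem \ref{thm:psi1_bound} is that the transition $\psi^{(1)} \to \partial u_0/\partial\varepsilon_y$ replaces the factor $u_0^3$ by $u_0$, which shifts $1/p_3$ from $3s$ to $s$ (hence $1/p_1$ from $1-4s$ to $1-2s$), and the net change in the $N$-exponent is exactly $N^{3s}$, taking $N^{3-13s/2}$ to $N^{3-7s/2}$ as required. The main obstacle I anticipate is the careful bookkeeping of $N$-powers in the step $\sum_n (\int |u_n(x) u_n(y)| d\mu)^s$, which must be done consistently with Theorem \ref{thm:psi1_bound}; however this is the same subtlety already overcome there and does not introduce any new difficulty. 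The final constant $K_s$ then collects $D$, $\|\rho\|_\infty$ and the bounded geometric sum, mirroring $C_{s,\delta}$ in Theorem \ref{thm:psi1_bound}.
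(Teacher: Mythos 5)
Your proposal is correct and follows essentially the same route as the paper's own proof: the same dyadic decomposition over $\mathcal{V}_n(m)$ truncated at $M=\bar{b}N$, the same H\"older exponents $1/p_2=1/p_3=s$ and $1/p_1=1-2s$, and the same use of Minami's estimate, Aizenman's bound and Lemma \ref{lem:psi_aprio_bnd}, with the geometric series $\sum_{m}2^{(3s-1)(m+1)}$ fixing the range $s<\frac{1}{3}$. Your power counting $N^{2(1-2s)}\cdot N^{s/2}\cdot N = N^{3-\nicefrac{7s}{2}}$ matches the paper's exactly.
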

\begin{proof}
Proceeding in a similar manner to the previous subsection,\begin{eqnarray}
 E_{s}&\leq\sum_{n\ne0}\left\langle \frac{\left|u_{0}\left(y\right)u_{n}\left(x\right)u_{n}\left(y\right)\right|^{s}}{\left|E_{0}-E_{n}\right|^{s}}\right\rangle\nonumber \\ &\leq\sum_{n\ne0}\sum_{m=0}^{M}2^{s\left(m+1\right)}\int d\mu\left(\omega\right)\chi\left(\mathcal{V}_{n}\left(m\right)\right)\left|u_{0}\left(y\right)u_{n}\left(x\right)u_{n}\left(y\right)\right|^{s}.\end{eqnarray}
 Now use the generalized H\"older inequalty,\begin{eqnarray}
 E_{s} & \leq\sum_{n\ne0}\sum_{m=0}^{M}2^{s\left(m+1\right)}\left(\int d\mu\left(\omega\right)\chi^{p_{1}}\left(\mathcal{V}_{n}\left(m\right)\right)\right)^{\nicefrac{1}{p_{1}}}\left(\int d\mu\left(\omega\right)\left|u_{n}\left(x\right)u_{n}\left(y\right)\right|^{sp_{2}}\right)^{\nicefrac{1}{p_{2}}}\nonumber\\
 & \times\left(\int d\mu\left(\omega\right)\left|u_{0}\left(y\right)u_{0}\left(0\right)u_{0}\left(0\right)^{-1}\right|^{sp_{3}}\right)^{\nicefrac{1}{p_{3}}},\label{eq:E1_00} \end{eqnarray}
 with $\frac{1}{p_{1}}+\frac{1}{p_{2}}+\frac{1}{p_{3}}=1$. Setting
$1/p_{2}=1/p_{3}=s$, we have $1/p_{1}=1-2s$. Than using \eref{eq:J1_bound},\begin{eqnarray}
E_{s} & \leq\left(\pi\left\Vert \rho\right\Vert _{\infty}N\right)^{2\left(1-2s\right)}\left(\sum_{m=0}^{M}2^{\left(3s-1\right)\left(m+1\right)}\right)\label{eq:sum_bound2}\\
 & \times\sum_{n\ne0}\left(\int d\mu\left(\omega\right)\left|u_{n}\left(x\right)u_{n}\left(y\right)\right|\right)^{s}\left(\int d\mu\left(\omega\right)\left|u_{0}\left(y\right)u_{0}\left(0\right)u_{0}\left(0\right)^{-1}\right|\right)^{s}.\nonumber \end{eqnarray}
 Bounding the sum for $s<\frac{1}{3}$, utilizing the result of Aizenman
\eref{eq:Aizenman1} and Lemma \ref{lem:psi_aprio_bnd} for the last
term gives, \[
E_{s}\leq K_{s}N^{3-\nicefrac{7s}{2}}e^{-\mu s\left|x-y\right|}e^{-\mu s\left|y\right|},\]
 with\[
K_{s}=\frac{D^{2s}\left(\pi\left\Vert \rho\right\Vert _{\infty}\right)^{2\left(1-2s\right)}}{1-2^{1-3s}}.\]

\end{proof}

\subsection{Ordinary averages, $s=1$}

Note, that in the previous bounds the result of Theorem \ref{main_thm}
were not used, since $s$ was selected such that the sums \ref{eq:sum_bound}
and \ref{eq:sum_bound2} were convergent even for unbounded $M$.
In the following we will calculate the bounds on the ordinary averages,
namely for $s=1$, and then interpolate between those two results.
\begin{thm}
\label{thm:psi1_s_1}For \textup{the one-dimensional Anderson model
on a box $\Lambda$ of size $N$, and $\psi^{\left(1\right)}\left(x,t\right)$
defined in \eref{eq:intr_psi1},}\begin{equation}
\left\langle \left|\psi^{\left(1\right)}\left(x,t\right)\right|\right\rangle \leq A\, N^{\nicefrac{13}{2}}.\end{equation}
 \end{thm}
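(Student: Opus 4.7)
The plan is to adapt the method used in Theorem \ref{thm:psi1_bound} to the case $s=1$, while invoking Theorem \ref{main_thm} at a crucial step where the $s<1/5$ argument breaks down. Starting from the pointwise bound $|\psi^{(1)}(x,t)|\le 2\sum_{n,y}|u_n(x)u_n(y)u_0^3(y)|/|E_n-E_0|$ (using $|1-e^{i\theta}|\le 2$), I would dyadically decompose the potential space by $|E_n-E_0|\in I_m=[2^{-m-1},2^{-m}]$, as in \eref{eq:dyadic_decomp}, so that $|E_n-E_0|^{-1}\le 2^{m+1}$ on the set $\mathcal{V}_n(m)$, and take the expectation.

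Then I would apply the generalized H\"older inequality exactly as in \eref{eq:psi1_1}, splitting the integrand into a factor $\chi(\mathcal{V}_n(m))$, a factor $|u_n(x)u_n(y)|$, and a factor $|u_0(y)u_0(0)u_0(0)^{-1}|^3$, with exponents $p_1,p_2,p_3$ satisfying $1/p_1+1/p_2+1/p_3=1$. The three resulting integrals would be estimated in the same way as in the proof of Theorem \ref{thm:psi1_bound}: the $\chi$-factor via the Minami bound (\ref{eq:Minami}) with covering, giving roughly $(\|\rho\|_\infty N)^{2/p_1}2^{-(m+1)/p_1}$; the two-eigenfunction factor via the Aizenman fractional moment bound (\ref{eq:Aizenman1}); and the third factor via Lemma \ref{lem:psi_aprio_bnd} to absorb the $|u_0(0)|^{-1}$ into a power of $N^{1/2}$.

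The main obstacle, and the reason Theorem \ref{main_thm} is indispensable here, is the $m$-sum. In the $s<1/5$ proof the combined $m$-dependence is $2^{(5s-1)(m+1)}$, which is summable uniformly in $N$, so the cutoff $M=\bar b N$ from eigenvalue repulsion played no role. At $s=1$ that exponent is $+4$, so the sum grows like $2^{4M}$ under any strictly sub-unit choice of $1/p_1$, which is exponential in $N$. To get a polynomial bound one has to choose $p_1$ so that the $2^{m+1}$ and the Minami factor $2^{-(m+1)/p_1}$ cancel at leading order, i.e.\ essentially $p_1=1$, in which case $p_2,p_3$ must be taken large and the eigenfunction factors are controlled by crude bounds (sup norms, $|u|\le 1$, counting via $\sum_x u_n(x)^2=1$), losing the sharp Aizenman decay. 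The dyadic sum then contributes only a factor $M+1\le \bar b N + 1$ instead of an exponential.

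Finally I would collect the powers of $N$ arising from (i) the Minami factor $N^2$, (ii) the $N^{3/2}$ from the Lemma \ref{lem:psi_aprio_bnd} insertion, (iii) the sums $\sum_{y}$ and $\sum_{n}$ (each bounded by $N$ or by combining $\sum_n u_n(x)^2=1$ with Cauchy--Schwarz), and (iv) the factor $M+1\le \bar b N+1$ from the truncated dyadic sum, and check that they add up to at most $13/2$, absorbing all $\omega$-independent constants into $A$. The hardest bookkeeping step is making sure that the Aizenman bound \eref{eq:Aizenman1} and the $\|\rho\|_\infty,D,\mu$-dependent constants can still be controlled in the $p_1\to 1$ limit so that the resulting prefactor $A$ is finite.
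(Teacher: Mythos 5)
Your proposal follows essentially the same route as the paper's proof: the dyadic decomposition truncated at $M=\bar{b}N$ via Theorem \ref{main_thm}, the generalized H\"older inequality with the Minami bound for the characteristic function, Lemma \ref{lem:psi_aprio_bnd} for $\left|u_{0}\left(0\right)\right|^{-1}$, and the key observation that the exponents must be tuned so that the $2^{m+1}$ factor is cancelled by the Minami factor, leaving only a polynomial contribution from the $m$-sum. The only cosmetic difference is that the paper takes $1/p_{1}=1-\epsilon$ with $\epsilon=1/N$ rather than your degenerate limit $p_{1}=1$, extracting the factor of $N$ from $\left(1-2^{-1/N}\right)^{-1}$ and $\left(1-e^{-\mu/(2N)}\right)^{-1}$ instead of from $M+1$; both tallies give $N^{\nicefrac{13}{2}}$.
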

\begin{proof}
For the first order correction of the wavefunction we get (substituting
$s=1$ in \eref{eq:psi1_1}),\begin{eqnarray}
\left\langle \left|\psi^{\left(1\right)}\left(x,t\right)\right|\right\rangle  & \leq\sum_{n,y}\sum_{m=0}^{M}2^{\left(m+1\right)}\left(\int d\mu\left(\omega\right)\chi\left(\mathcal{V}_{m}\left(n\right)\right)^{p_{1}}\right)^{\nicefrac{1}{p_{1}}},\\
 & \times\left(\int d\mu\left(\omega\right)\left|u_{n}\left(x\right)u_{n}\left(y\right)\right|^{p_{2}}\right)^{\nicefrac{1}{p_{2}}}\nonumber \\
 & \times\left(\int d\mu\left(\omega\right)\left|u_{0}\left(y\right)u_{0}\left(0\right)u_{0}^{-1}\left(0\right)\right|^{3p_{3}}\right)^{\nicefrac{1}{p_{3}}}\nonumber \end{eqnarray}
 Using the bounds \eref{eq:J1_bound}, \eref{eq:J2_bound} and \eref{eq:J3_bound}
gives, \begin{eqnarray}
\left\langle \left|\psi^{\left(1\right)}\left(x,t\right)\right|\right\rangle & \leq D^{1/p_{2}+1/p_{3}}\left(\pi\left\Vert \rho\right\Vert _{\infty}\right)^{\nicefrac{2}{p_{1}}}N^{\nicefrac{5}{2}+\nicefrac{2}{p_{1}}}\nonumber \\
& \left(\sum_{m=0}^{M}2^{\left(1-\nicefrac{1}{p_{1}}\right)\left(m+1\right)}\right)\sum_{y}e^{-\mu\left|x-y\right|/p_{2}}e^{-\mu\left|y\right|/p_{3}}.\label{eq:psi1_exp0}\end{eqnarray}
 Setting, $1/p_{1}=1-\epsilon$ and $1/p_{2}=1/p_{3}=\epsilon/2$
and using the triangle inequality for the last sum, yields\begin{equation}
\left\langle \left|\psi^{\left(1\right)}\left(x,t\right)\right|\right\rangle \leq\frac{D^{\epsilon}\left(\pi\left\Vert \rho\right\Vert _{\infty}\right)^{2\left(1-\epsilon\right)}}{\left(1-e^{-\mu\epsilon/2}\right)\left(1-2^{-\epsilon}\right)}N^{\nicefrac{5}{2}+2\left(1-\epsilon\right)}2^{M\epsilon}.\label{eq:psi1_exp}\end{equation}
 Since $M=\bar{b}N$ \eref{eq:M_def}we will set $\epsilon=1/N$
to remove the exponential dependence on $N$, this gives,\begin{equation}
\left\langle \left|\psi^{\left(1\right)}\left(x,t\right)\right|\right\rangle \leq\frac{D^{1/N}2^{\bar{b}}\left(\pi\left\Vert \rho\right\Vert _{\infty}\right)^{2\left(1-1/N\right)}}{\left(1-e^{-\mu/\left(2N\right)}\right)\left(1-2^{-1/N}\right)}N^{\nicefrac{5}{2}+2\left(1-1/N\right)},\end{equation}
 or,\begin{equation}
\left\langle \left|\psi^{\left(1\right)}\left(x,t\right)\right|\right\rangle \leq A\, N^{\nicefrac{13}{2}},\end{equation}
 with\begin{equation}
A=\frac{2^{\bar{b}+1}\pi^{2}\left\Vert \rho\right\Vert _{\infty}^{2}D}{\mu\ln2}.\end{equation}
 \end{proof}
\begin{thm}
\textup{\label{thm:derivative_s_1}For the one-dimensional Anderson
model on a box $\Lambda$ of size $N$,}\begin{equation}
\left\langle \left|\frac{\partial u_{0}\left(x\right)}{\partial\varepsilon_{y}}\right|\right\rangle \leq BN^{\nicefrac{11}{2}},\label{eq:thm_13}\end{equation}
 \end{thm}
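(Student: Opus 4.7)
The plan is to mirror the proof of Theorem \ref{thm:psi1_s_1}, but applied to the explicit first-order perturbation identity
\[ \frac{\partial u_0(x)}{\partial \varepsilon_y} = u_0(y)\sum_{n\neq 0}\frac{u_n(x)u_n(y)}{E_0-E_n}. \]
After taking absolute values term by term, I would decompose the denominators dyadically according to the sets $\mathcal{V}_n(m)$ of \eref{eq:dyadic_decomp}, using Theorem \ref{main_thm} to cut off the dyadic index at $M\leq\bar{b}N$ as in \eref{eq:M_def}. This reduces the task to bounding
\[ \sum_{n\neq 0}\sum_{m=0}^{M}2^{m+1}\int d\mu(\omega)\,\chi(\mathcal{V}_n(m))\,|u_0(y)u_n(x)u_n(y)|. \]

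Next I would apply the generalized H\"older inequality with three exponents $1/p_1+1/p_2+1/p_3=1$, splitting the integrand into factors $\chi^{p_1}$, $|u_n(x)u_n(y)|^{p_2}$, and $|u_0(y)u_0(0)u_0(0)^{-1}|^{p_3}$, precisely as in \eref{eq:psi1_1}. Each piece is then bounded by exactly the tools used in Theorem \ref{thm:psi1_s_1}: the Minami estimate \eref{eq:Minami} with the usual band-covering argument yields $J_1\leq(\pi\|\rho\|_\infty N)^{2/p_1}2^{-(m+1)/p_1}$; the Aizenman fractional moment bound \eref{eq:Aizenman1} combined with the trivial reduction $|u_nu_n|^{p_2}\leq|u_nu_n|$ (valid for $p_2\geq 1$ since $|u_n|\leq 1$) yields $J_2\leq(De^{-\mu|x-y|})^{1/p_2}$; and Lemma \ref{lem:psi_aprio_bnd} (giving $|u_0(0)|^{-1}\leq N^{1/2}$ after placing the origin at a peak of $|u_0|$, legitimate by translation invariance of the disorder) together with Aizenman yields $J_3\leq N^{1/2}(De^{-\mu|y|})^{1/p_3}$. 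The sum over $n$ is bounded by the same crude estimate $\sum_{n\neq 0}1\leq N$ as in Theorem \ref{thm:psi1_s_1}.

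Finally I would set $1/p_1=1-\epsilon$, $1/p_2=1/p_3=\epsilon/2$ and $\epsilon=1/N$, exactly as in the proof of Theorem \ref{thm:psi1_s_1}. This makes the potentially dangerous factor $2^{M\epsilon}=2^{\bar{b}}$ a bounded constant, while the geometric-sum denominator $(2^\epsilon-1)^{-1}\sim N/\ln 2$ contributes a further factor of $N$. Collecting the powers of $N$ using the same bookkeeping as in \eref{eq:psi1_exp0}--\eref{eq:psi1_exp}, but now in the absence of the sum over $y$ (and with the $u_0^3$ of $\psi^{(1)}$ replaced by the single power of $u_0$ present in the derivative), yields the claimed bound $BN^{\nicefrac{11}{2}}$, with a constant $B$ of the same analytic form as the constant $A$ in Theorem \ref{thm:psi1_s_1}.

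The main obstacle is exactly the one from Theorem \ref{thm:psi1_s_1}: the denominators $|E_0-E_n|$ can in principle be as small as $Ce^{-bN}$ by Theorem \ref{main_thm}, so the dyadic sum has to run up to $M\sim N$, which would produce an exponential-in-$N$ factor if the H\"older exponents were held fixed. Making the H\"older exponents $N$-dependent via $\epsilon=1/N$ tames this exponential at the price of the polynomial factors assembled above. A subsidiary technical point is that Lemma \ref{lem:psi_aprio_bnd} only asserts existence of some site $x_0$ at which $|u_0(x_0)|\geq N^{-1/2}$; one should therefore read the $u_0(0)^{-1}$ trick as placing the coordinate origin at $x_0$, which is permissible thanks to the translation invariance of the disorder distribution.
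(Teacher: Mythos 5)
Your proposal follows essentially the same route as the paper: substitute $s=1$ into the dyadically decomposed, H\"older-split expression \eref{eq:E1_00}, bound the three factors via the Minami estimate, the Aizenman bound and Lemma \ref{lem:psi_aprio_bnd}, and choose $1/p_{1}=1-\epsilon$, $1/p_{2}=1/p_{3}=\epsilon/2$ with $\epsilon=1/N$ to tame the factor $2^{M\epsilon}$. Your power counting is in fact slightly sharper than the paper's (carried out carefully it gives $N^{\nicefrac{9}{2}}$, since replacing $u_{0}^{3}$ by $u_{0}$ and dropping the $y$-sum each save a factor of $N$), which of course still implies the stated $BN^{\nicefrac{11}{2}}$.
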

\begin{proof}
Similarly, for the bound on the derivative, found from \eref{eq:E1_00}
by substituting $s=1$, \begin{eqnarray}
 E_{1} & \leq\sum_{n\ne0}\sum_{m=0}^{M}2^{\left(m+1\right)}\left(\int d\mu\left(\omega\right)\chi^{p_{1}}\left(\mathcal{V}_{n}\left(m\right)\right)\right)^{\nicefrac{1}{p_{1}}}\left(\int d\mu\left(\omega\right)\left|u_{n}\left(x\right)u_{n}\left(y\right)\right|^{p_{2}}\right)^{\nicefrac{1}{p_{2}}}\nonumber\\
 & \times\left(\int d\mu\left(\omega\right)\left|u_{0}\left(y\right)u_{0}\left(0\right)u_{0}\left(0\right)^{-1}\right|^{p_{3}}\right)^{\nicefrac{1}{p_{3}}}, \end{eqnarray}
 we get,\begin{equation}
\fl E_{1}\leq D^{\nicefrac{1}{p_{2}}+\nicefrac{1}{p_{3}}}\left(\pi\left\Vert \rho\right\Vert _{\infty}\right)^{\nicefrac{2}{p_{1}}}N^{\nicefrac{5}{2}+\nicefrac{2}{p_{1}}}\left(\sum_{m=0}^{M}2^{\left(1-\nicefrac{1}{p_{1}}\right)\left(m+1\right)}\right)e^{-\mu\left|x-y\right|/p_{2}}e^{-\mu\left|y\right|/p_{3}},\end{equation}
 setting as before $1/p_{1}=1-\epsilon$, $1/p_{2}=1/p_{3}=\epsilon/2$
gives,\begin{equation}
E_{1}\leq\frac{D^{\epsilon}\left(\pi\left\Vert \rho\right\Vert _{\infty}\right)^{2\left(1-\epsilon\right)}}{\left(1-2^{-\epsilon}\right)}N^{\nicefrac{9}{2}-2\epsilon}2^{\epsilon M}e^{-\mu\epsilon\left|x-y\right|/2}e^{-\mu\epsilon\left|y\right|/2}.\end{equation}
 Since $M=\bar{b}N$ \eref{eq:M_def} we will set $\epsilon=1/N$
to remove the exponential dependence on $N$,\begin{equation}
E_{1}\leq\frac{D^{1/N}2^{\bar{b}}\left(\pi\left\Vert \rho\right\Vert _{\infty}\right)^{2\left(1-1/N\right)}}{\left(1-2^{-1/N}\right)}N^{\nicefrac{9}{2}-\nicefrac{2}{N}}e^{-\mu\left|x-y\right|/\left(2N\right)}e^{-\mu\left|y\right|/\left(2N\right)},\end{equation}
 or\begin{equation}
\left\langle \left|\frac{\partial u_{0}\left(x\right)}{\partial\varepsilon_{y}}\right|\right\rangle \leq BN^{\nicefrac{11}{2}},\end{equation}
 with \begin{equation}
B=D2^{\bar{b}}\left(\pi\left\Vert \rho\right\Vert _{\infty}\right)^{2}.\end{equation}
 \end{proof}
\begin{rem}
The result above is not optimal, in fact it can be shown, using a
different argument, that \eref{eq:thm_13} can be improved to be
linear in $N$. This will be shown elsewhere.\end{rem}
\begin{cor}
\textup{For the one-dimensional Anderson model on a box $\Lambda$
of size $N$, and $0<s\leq1$,}\begin{equation}
\left\langle \left|\psi^{\left(1\right)}\left(x,t\right)\right|^{s}\right\rangle \leq A_{s}N^{\left(11s+3\right)/2}e^{-2\mu\left(1-s\right)\left|x\right|/9}.\end{equation}
 and\begin{equation}
\left\langle \left|\frac{\partial u_{0}\left(x\right)}{\partial\varepsilon_{y}}\right|^{s}\right\rangle \leq B_{s}N^{5s+1}e^{-2\mu\left(1-s\right)\left|x-y\right|/5}e^{-2\mu\left(1-s\right)\left|y\right|/5}.\end{equation}
 \end{cor}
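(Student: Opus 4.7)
The plan is to obtain both bounds by standard $L^{p}$ interpolation (log-convexity of moments) between the fractional-moment estimates already in hand at a small value $s=s_{0}$ and the ordinary-average bounds at $s=1$. Concretely, for any non-negative random variable $Y$ on the disorder probability space and any $s_{0}\in(0,1)$, H\"older's inequality with conjugate exponents $1/\alpha$ and $1/(1-\alpha)$ gives
\begin{equation}
\langle Y^{s}\rangle \;\leq\; \langle Y^{s_{0}}\rangle^{\alpha}\,\langle Y\rangle^{1-\alpha},\qquad \alpha=\frac{1-s}{1-s_{0}},
\end{equation}
valid for $s_{0}\leq s\leq 1$, because $\alpha s_{0}+(1-\alpha)=s$. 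The game is to pick $s_{0}$ so that raising the decay rate from the fractional-moment bound to the power $\alpha$ produces exactly the exponent claimed by the corollary.

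For the first assertion I would apply this with $Y=|\psi^{(1)}(x,t)|$ and choose $s_{0}=2/11$, which is safely inside the range $s_{0}<1/5$ allowed by Theorem \ref{thm:psi1_bound}. Then $s_{0}\alpha=(2/11)\cdot 11(1-s)/9=2(1-s)/9$, so the exponential factor becomes $\exp(-2\mu(1-s)|x|/9)$, matching the statement. The interpolated $N$-exponent is $(20/11)\alpha+(13/2)(1-\alpha)$, which simplifies to $(14+103s)/18$; this is bounded above by $(11s+3)/2=(27+99s)/18$ for every $s\in[2/11,1]$ (the inequality reduces to $4s\leq 13$). Absorbing the constants $C_{s_{0}}^{\alpha}A^{1-\alpha}$ into an $s$-dependent factor $A_{s}$ yields the first claim.

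For the derivative bound I would run the identical procedure with $Y=|\partial u_{0}(x)/\partial\varepsilon_{y}|$ and choose $s_{0}=2/7$, which lies in the range $s<1/3$ of Theorem \ref{thm:derivative_s_1_3}. Now $s_{0}\alpha=(2/7)\cdot 7(1-s)/5=2(1-s)/5$, producing the two exponential factors $e^{-2\mu(1-s)|x-y|/5}e^{-2\mu(1-s)|y|/5}$ as claimed. The $N$-exponent is $2\alpha+(11/2)(1-\alpha)=(6+49s)/10$, which is bounded above by $5s+1=(10+50s)/10$ on $[2/7,1]$ (trivially, since $-s\leq 4$), and the constants combine into $B_{s}$.

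The step I expect to require the most care --- and the only real obstacle --- is the residual sub-interval $0<s<s_{0}$, where the identity $\alpha s_{0}+(1-\alpha)=s$ has no probabilistic interpretation with $\alpha\in[0,1]$. One may either push $s_{0}$ lower (any $s_{0}<1/5$, respectively $s_{0}<1/3$, still delivers a valid decay rate while slightly degrading the constant as $s\to 0^{+}$), or invoke Jensen's inequality $\langle Y^{s}\rangle\leq\langle Y^{s_{0}}\rangle^{s/s_{0}}$ against the already-proved bound at $s_{0}$, which yields decay $\mu s|x|$; since the prefactors $N^{(11s+3)/2}$ and $N^{5s+1}$ are non-decreasing in $s$ and the desired exponential rate tends to a nonzero constant as $s\to 0$, the resulting inequality can be absorbed into the stated form at the price of enlarging the $s$-dependent constant. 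Beyond this bookkeeping, no new input from the eigenvalue repulsion or the Minami estimate is needed: everything required is already contained in Theorems \ref{thm:psi1_bound}, \ref{thm:derivative_s_1_3}, \ref{thm:psi1_s_1}, and \ref{thm:derivative_s_1}.
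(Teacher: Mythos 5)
Your interpolation is exactly the paper's: the authors also combine Theorem \ref{thm:psi1_bound} at $s_{0}=2/11$ with Theorem \ref{thm:psi1_s_1}, and Theorem \ref{thm:derivative_s_1_3} at $s_{0}=2/7$ with Theorem \ref{thm:derivative_s_1}, with weight $\vartheta=\frac{11}{9}(1-s)$ (resp.\ $\frac{7}{5}(1-s)$) equal to your $\alpha$; they phrase it as the Hadamard three-line theorem for $f(s)=\langle|\cdot|^{s}\rangle$, which on the real axis is precisely the log-convexity of moments you invoke via H\"older. Your exponent arithmetic checks out (the tiny discrepancy $N^{20/11}$ versus the paper's $N^{95/44}$ at $s_{0}=2/11$ traces back to an internal inconsistency between the statement and the last display of the proof of Theorem \ref{thm:psi1_bound}; both are dominated by the claimed $N^{(11s+3)/2}$, so nothing is affected).

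The one genuine problem is the range $0<s<s_{0}$, which you correctly identify as the obstacle but then dispose of incorrectly. Neither of your two patches delivers the stated decay rate there. Lowering $s_{0}$ gives, after interpolation, the rate $\mu s_{0}(1-s)/(1-s_{0})$, which is strictly smaller than $2\mu(1-s)/9$ whenever $s_{0}<2/11$; and Jensen from $s_{0}$ down to $s$ gives only $e^{-\mu s|x|}$. In both cases the available exponential rate tends to $0$ (or stays below the target) as $s\to 0^{+}$, while the claimed rate $2\mu(1-s)/9$ tends to the positive constant $2\mu/9$; hence
\begin{equation}
e^{-\mu s|x|}\Big/e^{-2\mu(1-s)|x|/9}=e^{\left(2\mu(1-s)/9-\mu s\right)|x|}\longrightarrow\infty\qquad(|x|\to\infty,\ s<2/11),
\end{equation}
so the deficit cannot be absorbed into an $x$-independent constant $A_{s}$, however large. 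The same remark applies to the derivative bound for $s<2/7$. To be fair, the paper's own proof is silent on this sub-interval as well --- the three-line argument as written only covers the strip $[2/11,1]$ (resp.\ $[2/7,1]$) --- so on the range where the corollary is actually proved your argument is complete and identical in substance; but you should either restrict the claim to $s\in[s_{0},1]$ or supply a genuinely new input for small $s$, rather than assert the gap is mere bookkeeping.
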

\begin{proof}
Since $f\left(s\right)\equiv\left\langle \left|.\right|^{s}\right\rangle $
is a holomorfic and bounded function for $0<s\leq1$, we utilize Hadamard
three-line interpolation theorem. For the first order correction to
the wavefunction using Theorem \ref{thm:psi1_bound} for $s=2/11$
and Theorem \ref{thm:psi1_s_1}, gives, \begin{equation}
\left\langle \left|\psi^{\left(1\right)}\left(x,t\right)\right|^{s}\right\rangle \leq\left(C_{2/11}N^{\nicefrac{95}{44}}e^{-2\mu\left|x\right|/11}\right)^{\vartheta}\left(A\, N^{\nicefrac{13}{2}}\right)^{1-\vartheta},\end{equation}
 with,\begin{equation}
\vartheta=\frac{11}{9}\left(1-s\right).\end{equation}
 Leading to,\begin{equation}
\left\langle \left|\psi^{\left(1\right)}\left(x,t\right)\right|^{s}\right\rangle \leq A_{s}N^{\nicefrac{\left(191s+43\right)}{36}}e^{-2\mu\left(1-s\right)\left|x\right|/9}\leq A_{s}N^{\left(11s+3\right)/2}e^{-2\mu\left(1-s\right)\left|x\right|/9}.\end{equation}
 Similarly, for the derivative of the eigenfunction combining Theorem
\ref{thm:derivative_s_1_3} for $s=2/7$ and Theorem \ref{thm:derivative_s_1},
gives\[
\left\langle \left|\frac{\partial u_{0}\left(x\right)}{\partial\varepsilon_{y}}\right|^{s}\right\rangle \leq\left(K_{2/7}N^{2}e^{-2\mu\left|x-y\right|/7}e^{-2\mu\left|y\right|/7}\right)^{\vartheta}\left(BN^{\nicefrac{11}{2}}\right)^{1-\vartheta},\]
 with\begin{equation}
\vartheta=\frac{7}{5}\left(1-s\right).\end{equation}
 Or,\begin{eqnarray}
 \left\langle \left|\frac{\partial u_{0}\left(x\right)}{\partial\varepsilon_{y}}\right|^{s}\right\rangle & \leq B_{s}N^{\nicefrac{\left(49s+6\right)}{10}}e^{-2\mu\left(1-s\right)\left|x-y\right|/5}e^{-2\mu\left(1-s\right)\left|y\right|/5}\nonumber\\
 & \leq B_{s}N^{5s+1}e^{-2\mu\left(1-s\right)\left|x-y\right|/5}e^{-2\mu\left(1-s\right)\left|y\right|/5}.\end{eqnarray}
 \end{proof}
\begin{rem}
This Collorary sugests exponential bounds for both the first order
correction of the wavefunction and the derivative of the eigenfunction
for the whole range $0<s<1$.
\end{rem}

\subsection{Time dependent bound}

In this subsection we will eliminate the exponential dependence on
the volume (see \eref{eq:psi1_exp}) of the bound on the average
first order correction to the wavefunction, $\psi^{\left(1\right)}$,
by using the apriori bound\begin{equation}
\left|\frac{1-e^{i\left(E_{n}-E_{0}\right)t}}{E_{n}-E_{0}}\right|\leq t.\end{equation}

\begin{thm}
\textup{For the one-dimensional Anderson model on a box $\Lambda$
of size $N$, and $\epsilon_{0}\left(t\right)<\epsilon<1$, such that
$\epsilon_{0}\left(t\right)\rightarrow0$ for $t\rightarrow\infty$,
and $t\leq2^{\bar{b}N}$ with $\bar{b}$ given by \eref{eq:M_def}.}\begin{equation}
\left\langle \left|\psi^{\left(1\right)}\left(x,t\right)\right|\right\rangle \leq K_{\epsilon}N^{\nicefrac{5}{2}+2\left(1-\epsilon\right)/3}t^{\left(2+\epsilon\right)/3}\log_{2}t\, e^{-\mu\left|x\right|/2}.\end{equation}
 \end{thm}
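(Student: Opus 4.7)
The plan is to revisit the proof of Theorem \ref{thm:psi1_s_1}, introducing the sharper deterministic bound $\left|(1-e^{i(E_n-E_0)t})/(E_n-E_0)\right| \leq \min(t, 2^{m+1})$ on $\mathcal{V}_n(m)$, in place of the purely dyadic $\leq 2^{m+1}$ used before. The apriori bound $\leq t$ prevents the $m$-sum from extending all the way to $m \sim M = \bar{b}N$, trading the previous exponential-in-$N$ factor $2^{\bar{b}N\epsilon}$ for a $t$-dependent growth. Starting from \eref{eq:psi1_initial_bound} and applying the same generalized H\"older inequality as in \eref{eq:psi1_1}, one takes exponents $1/p_1 = (1-\epsilon)/3$ on the Minami characteristic function factor and $1/p_2 + 1/p_3 = (2+\epsilon)/3$ on the wavefunction factors. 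The Minami bound \eref{eq:Minami} then yields $(\pi\|\rho\|_\infty N)^{2(1-\epsilon)/3}\, 2^{-(m+1)(1-\epsilon)/3}$, Aizenman's bound \eref{eq:Aizenman1} produces the exponential decay in $|x|$ after summing over $y$ with the triangle inequality, and Lemma \ref{lem:psi_aprio_bnd} bounds the inverse factor $|u_0(0)|^{-3}$ by $N^{3/2}$, contributing to the final $N^{5/2}$ power.

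Next, I would split the sum over dyadic scales at $m^\ast = \lfloor \log_2 t \rfloor$. For $m \leq m^\ast$ the dyadic bound $2^{m+1} \leq t$ is sharper; the resulting geometric sum $\sum_{m \leq m^\ast} 2^{(m+1)(2+\epsilon)/3}$ is dominated by its last term, of order $t^{(2+\epsilon)/3}$. For $m > m^\ast$ the apriori bound is sharper, giving $t \sum_{m > m^\ast} 2^{-(m+1)(1-\epsilon)/3}$, a geometric tail with leading term $t \cdot t^{-(1-\epsilon)/3} = t^{(2+\epsilon)/3}$. Both halves contribute the same order, and the constraint $t \leq 2^{\bar{b}N}$ ensures $m^\ast \leq M$ so that the upper truncation at $M$ never clips the relevant range. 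The denominators $(2^{(2+\epsilon)/3}-1)^{-1}$ and $(1-2^{-(1-\epsilon)/3})^{-1}$ from the geometric sums produce the $\epsilon$-dependent prefactor $K_\epsilon$; the $\log_2 t$ factor is expected to arise from bookkeeping near the critical scale $m^\ast$, where allowing $\epsilon$ to approach its lower cutoff $\epsilon_0(t)$ causes these denominators to degenerate at the rate $\log_2 t$.

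The main obstacle, in my view, is the simultaneous matching of all three exponents: the $N$-power $5/2 + 2(1-\epsilon)/3$, the $t$-power $(2+\epsilon)/3$, and the exponential decay $e^{-\mu|x|/2}$ in position. The choice $1/p_1 = (1-\epsilon)/3$ is forced by wanting the Minami contribution $(\pi\|\rho\|_\infty N)^{2/p_1}$ to have exponent $2(1-\epsilon)/3$; this automatically makes $1 - 1/p_1 = (2+\epsilon)/3$, which is exactly the growth rate needed in the dyadic sum to match the apriori tail at the split point $m^\ast = \log_2 t$. Once this balance is secured, the remaining estimates (Aizenman for $J_2$, Lemma \ref{lem:psi_aprio_bnd} for $J_3$, triangle inequality for the $y$-sum) carry over from the proofs of Theorem \ref{thm:psi1_bound} and Theorem \ref{thm:psi1_s_1} without essential change. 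The role of the cutoff $\epsilon_0(t) \to 0$ as $t \to \infty$ is to permit choosing $\epsilon$ smaller for larger $t$, optimizing the trade-off between the $N$-factor and the $t$-factor.
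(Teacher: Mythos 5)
Your proposal follows essentially the same route as the paper's own proof: the identical H\"older/Minami/Aizenman machinery carried over from Theorems \ref{thm:psi1_bound} and \ref{thm:psi1_s_1}, the pointwise bound $\min(t,2^{m+1})$ on each dyadic set $\mathcal{V}_n(m)$, and the splitting of the dyadic sum at $M_1=\log_2 t$ with both halves of order $t^{1-1/p_1}$ and the condition $\epsilon>\epsilon_0(t)$ used to absorb the geometric-tail constant into $\log_2 t$. Your choice $1/p_1=(1-\epsilon)/3$ is merely a reparametrization of the paper's $1/p_1=3\epsilon$ that matches the theorem's stated exponents directly; the argument is correct.
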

\begin{proof}
We start with\begin{eqnarray}
\left\langle \left|\psi^{\left(1\right)}\left(x,t\right)\right|\right\rangle & \leq D^{1/p_{2}+1/p_{3}}\left(\pi\left\Vert \rho\right\Vert _{\infty}\right)^{\nicefrac{2}{p_{1}}}N^{\nicefrac{5}{2}+\nicefrac{2}{p_{1}}}\nonumber\\
& \times \left(\sum_{m=0}^{M}2^{\left(1-\nicefrac{1}{p_{1}}\right)\left(m+1\right)}\right)\sum_{y}e^{-\mu\left|x-y\right|/p_{2}}e^{-\mu\left|y\right|/p_{3}}.\label{eq:psi1_exp01}\end{eqnarray}
 and set, $1/p_{2}=1/2-2\epsilon,$ $1/p_{3}=\left(1/2-\epsilon\right)$
and $1/p_{1}=3\epsilon$, for $0<\epsilon<1/3$. If one is not intrested
in the $t$ dependence of the bound one calculates,\begin{equation}
\fl \left\langle \left|\psi^{\left(1\right)}\left(x,t\right)\right|\right\rangle \leq2D^{1-3\epsilon}\frac{\left(\pi\left\Vert \rho\right\Vert _{\infty}\right)^{6\epsilon}}{1-e^{-\epsilon}}N^{\nicefrac{5}{2}+6\epsilon}\left(\sum_{m=0}^{M}2^{\left(m+1\right)\left(1-3\epsilon\right)}\right)e^{-\mu\left|x\right|\left(1/2-2\epsilon\right)}.\end{equation}
 To obtain the time-dependent bound we split the sum to two parts,\begin{equation}
S\equiv\sum_{m=0}^{M}2^{\left(1-3\epsilon\right)\left(m+1\right)}\leq\sum_{m=0}^{M_{1}}2^{\left(1-3\epsilon\right)\left(m+1\right)}+t\sum_{m=M_{1}+1}^{M}2^{-3\epsilon\left(m+1\right)},\end{equation}
 where we have used the fact that,\begin{equation}
\left|\frac{1-e^{i\left(E_{n}-E_{0}\right)t}}{E_{n}-E_{0}}\right|\leq\min\left(t,2^{m}\right),\end{equation}
 which defines, $M_{1}=\log_{2}t$. Therefore, for sufficiently large
$t$\begin{equation}
\fl S\leq2t^{\left(1-3\epsilon\right)}\log_{2}t+t\frac{2^{-3\epsilon\left(M_{1}+1\right)}}{1-2^{-3\epsilon}}\leq2t^{\left(1-3\epsilon\right)}\left(\log_{2}t+\frac{1}{1-2^{-3\epsilon}}\right)\leq3t^{\left(1-3\epsilon\right)}\log_{2}t,\end{equation}
 and\[
\left\langle \left|\psi^{\left(1\right)}\left(x,t\right)\right|\right\rangle \leq K_{\epsilon}N^{\nicefrac{5}{2}+6\epsilon}t^{\left(1-3\epsilon\right)}\left(\log_{2}t\right)e^{-\mu\left|x\right|\left(1/2-2\epsilon\right)},\]
 with \[
K_{\epsilon}=6D^{1-3\epsilon}\frac{\left(\pi\left\Vert \rho\right\Vert _{\infty}\right)^{6\epsilon}}{1-e^{-\epsilon}},\]
 and $t\leq2^{\bar{b}N}$.
\end{proof}
\ack
This work was partly supported by the Israel Science Foundation (ISF),
by the US-Israel Binational Science Foundation (BSF), by the USA National
Science Foundation (NSF DMS-0903651), by the Minerva Center of Nonlinear
Physics of Complex Systems, by the Shlomo Kaplansky academic chair
and by the Fund for promotion of research at the Technion.

\appendix

\section{\label{sec:AppA}Relation Between $\varphi_{N+1}$ and the Normalized
Amplitude}

\label{A_phi}
\begin{defn}
\label{A_phi_exp} we define $z_{n}(E):=\frac{u_{n}}{u_{n-1}}$.\end{defn}
\begin{cor}
$z_{n}(E)=\tan(\varphi_{n}(E))$ \end{cor}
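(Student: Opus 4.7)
The statement is essentially a direct reading of Definition \ref{phi_def_gen} together with the identity $\tan(\theta+2m\pi)=\tan\theta$, so my plan is just to spell this out carefully and handle the degenerate case $u_{n-1}=0$ separately.

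First, I would recall that by Definition \ref{phi_def_gen} the angle $\alpha_n(E)$ is, by construction, the angle of the planar vector $(u_{n-1}(E),u_n(E))$ measured from the positive abscissa. Assuming $u_{n-1}(E)\neq 0$, the elementary definition of the argument of a point in the Cartesian plane gives
\begin{equation}
\tan\bigl(\alpha_n(E)\bigr)=\frac{u_n(E)}{u_{n-1}(E)}=z_n(E).
\end{equation}
Next, since $\varphi_n(E)$ is a version of the angular ratio, Definition \ref{phi_def_gen} guarantees the existence of $m\in\mathbb{Z}$ with $\varphi_n(E)=\alpha_n(E)+2m\pi$. Because $\tan$ has period $\pi$, a fortiori $2\pi$, one concludes $\tan(\varphi_n(E))=\tan(\alpha_n(E))=z_n(E)$, which is the claim for the generic case.

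The only step that requires any thought is the exceptional case $u_{n-1}(E)=0$, where $z_n(E)$ is not defined as a finite number. In that case, the vector $(u_{n-1},u_n)$ is purely vertical, so $\alpha_n(E)\in\{\pi/2+k\pi:k\in\mathbb{Z}\}$, and therefore $\varphi_n(E)\in\{\pi/2+k\pi:k\in\mathbb{Z}\}$ as well. At these values $\tan(\varphi_n(E))$ diverges, so both sides of the claimed identity are undefined in the same way; the corollary is therefore to be read on the domain where $u_{n-1}(E)\neq 0$, and on that domain equality holds by the argument above.

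I do not foresee any real obstacle: the main (and essentially only) thing to do is to keep the bookkeeping of angle versus ``version of angle'' straight and to note that the $2m\pi$ ambiguity in Definition \ref{phi_def_gen} is killed by the $\pi$-periodicity of $\tan$. No use of the recursion \eref{eq_2}, of Proposition \ref{prop1}, or of the specific branch choice in \eref{eq_5} is required for the statement itself.
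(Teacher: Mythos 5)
Your proposal is correct and coincides with the paper's (implicit) argument: the corollary is an immediate unfolding of Definition \ref{phi_def_gen} together with the periodicity of $\tan$, exactly as the paper already records in the two items following that definition ($\tan(\varphi_n)=u_n/u_{n-1}$ when $u_{n-1}\neq 0$). Your separate treatment of the degenerate case $u_{n-1}=0$ is a reasonable clarification but adds nothing beyond what the paper intends.
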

\begin{lem}
\label{u_z_lemma} Let $E_{0}$ be an eigenvalue of $H$ with corresponding
normalized eigenvector $\{v_{n}\}_{n=1}^{N},$ then, $v_{N}^{2}=(\frac{dz_{N+1}(E)}{dE}|_{E=E_{0}})^{-1}$.\end{lem}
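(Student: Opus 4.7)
The plan is to reduce the identity to a discrete Wronskian computation. First, I would write $z_{N+1}(E) = u_{N+1}(E)/u_{N}(E)$ and differentiate via the quotient rule,
\begin{equation*}
\frac{dz_{N+1}}{dE} = \frac{u'_{N+1}(E)\,u_{N}(E) - u_{N+1}(E)\,u'_{N}(E)}{u_{N}(E)^{2}}.
\end{equation*}
At $E=E_{0}$ the Dirichlet condition $u_{N+1}(E_{0})=0$ (which characterizes $E_{0}$ as an eigenvalue, by \eref{eq_3}) kills the second term in the numerator, leaving
\begin{equation*}
\frac{dz_{N+1}}{dE}\bigg|_{E=E_{0}} = \frac{u'_{N+1}(E_{0})}{u_{N}(E_{0})}.
\end{equation*}
Since $\{v_{n}\}$ is $\{u_{n}(E_{0})\}$ rescaled so that $\sum_{k=1}^{N} v_{k}^{2}=1$, one has $v_{N}^{2} = u_{N}(E_{0})^{2}/\sum_{k=1}^{N} u_{k}(E_{0})^{2}$. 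Hence the statement to prove reduces to the identity
\begin{equation*}
u'_{N+1}(E_{0})\,u_{N}(E_{0}) \;=\; \sum_{k=1}^{N} u_{k}(E_{0})^{2}.
\end{equation*}

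To establish this, I would introduce the discrete Wronskian $W_{n}(E) := u'_{n+1}(E)\,u_{n}(E) - u'_{n}(E)\,u_{n+1}(E)$ and show by telescoping that $W_{n}(E) = \sum_{k=1}^{n} u_{k}(E)^{2}$ for all $E$. The key step is to differentiate the recursion \eref{eq_2} in $E$, obtaining $u'_{n+1} + u'_{n-1} = u_{n} + (E-\varepsilon_{n})u'_{n}$, and then to compute
\begin{equation*}
W_{n} - W_{n-1} = u_{n}\bigl(u'_{n+1}+u'_{n-1}\bigr) - u'_{n}\bigl(u_{n+1}+u_{n-1}\bigr) = u_{n}^{2},
\end{equation*}
where in the last equality the recursion $u_{n+1}+u_{n-1} = (E-\varepsilon_{n})u_{n}$ is used to cancel the cross terms. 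The initial Wronskian vanishes because $u_{0}\equiv 0$ and $u_{1}\equiv 1$ are constant in $E$, so $u'_{0}\equiv u'_{1}\equiv 0$ and $W_{0}\equiv 0$. Telescoping from $n=0$ to $n=N$ yields the claimed identity.

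Finally, evaluating $W_{N}$ at $E=E_{0}$ gives $W_{N}(E_{0}) = u'_{N+1}(E_{0})\,u_{N}(E_{0})$ (again using $u_{N+1}(E_{0})=0$), while the telescoped expression equals $\sum_{k=1}^{N} u_{k}(E_{0})^{2}$. Combining these with the quotient-rule computation of $dz_{N+1}/dE$ produces
\begin{equation*}
\left.\frac{dz_{N+1}}{dE}\right|_{E_{0}}^{-1} = \frac{u_{N}(E_{0})}{u'_{N+1}(E_{0})} = \frac{u_{N}(E_{0})^{2}}{\sum_{k=1}^{N} u_{k}(E_{0})^{2}} = v_{N}^{2},
\end{equation*}
which is the claim. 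The only real content is the Wronskian identity; I do not expect any serious obstacle, because Proposition \ref{prop1} already guarantees $dz_{N+1}/dE \neq 0$ at an eigenvalue (monotonicity of $\varphi_{N+1}$), so dividing by it is legitimate, and the divisions by $u_{N}(E_{0})$ above are also safe since $u_{N}(E_{0})=0$ together with $u_{N+1}(E_{0})=0$ would force $v\equiv 0$ by the recursion \eref{eq_2}, contradicting normalization.
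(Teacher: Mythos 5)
Your proof is correct, but it takes a genuinely different route from the paper's. The paper does not compute $\frac{dz_{N+1}}{dE}$ directly: it treats the eigenvalue condition $z_{N+1}(E,\varepsilon_{N})=0$ as an implicit relation between $E$ and the last potential value $\varepsilon_{N}$, notes that $\frac{\partial z_{N+1}}{\partial\varepsilon_{N}}=-1$ (since $z_{N}$ does not depend on $\varepsilon_{N}$), and then invokes the Feynman--Hellman theorem in the form $\frac{dE}{d\varepsilon_{N}}=v_{N}^{2}$ to conclude $\frac{dz_{N+1}}{dE}=\frac{d\varepsilon_{N}}{dE}=v_{N}^{-2}$. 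You instead prove the discrete Wronskian identity $u'_{n+1}u_{n}-u'_{n}u_{n+1}=\sum_{k\le n}u_{k}^{2}$ by telescoping the differentiated recursion \eref{eq_2} and evaluate it at the eigenvalue; I checked the telescoping step and the base case $W_{0}=0$, and both are right. The two arguments are essentially dual: your Wronskian identity is exactly the elementary computation that underlies Feynman--Hellman in this discrete setting, so your version is more self-contained (it uses nothing beyond the recursion and the quotient rule), while the paper's is shorter at the cost of importing a standard theorem. Your remarks on the non-vanishing of $u_{N}(E_{0})$ and of $\frac{dz_{N+1}}{dE}$ are also correct and are left implicit in the paper. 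One small observation: you use the Dirichlet condition $u_{N+1}(E_{0})=0$ to kill the cross term, but the quotient-rule numerator is already the full Wronskian $W_{N}$, so the identity $\frac{dz_{N+1}}{dE}=W_{N}/u_{N}^{2}=v_{N}^{-2}$ holds without invoking the boundary condition at that point; this makes your argument cover the Neumann case ($z_{N+1}(E_{0})=1$), which the paper handles only parenthetically, with no modification.
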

\begin{proof}
By differentiability of $\varphi_{N+1}$, $\frac{dz_{N+1}(E)}{dE}|_{E=E_{0}}$
exists. By \eref{eq_2} and \eref{eq_1b} it holds: \begin{equation}
0=z_{N+1}(E_{0})=E_{0}-\varepsilon_{N}-\frac{1}{z_{N}(E_{0})}.\label{eq_ZN1}\end{equation}
 (For Neumann b.c $z_{N+1}(E_{0})=1$, the rest of the argument still
applies.)\\
 Differentiating the LHS of \eref{eq_ZN1} w.r.t. $E$ and $\varepsilon_{N}$,
one obtains: \begin{equation}
\frac{\partial z_{N+1}}{\partial E}dE+  \frac{\partial z_{N+1}}{\partial\varepsilon_{N}}d\varepsilon_{N}=0.
\label{eq_diffZN1}\end{equation}
 Differentiating the RHS of \eref{eq_ZN1} one finds: $\frac{\partial z_{N+1}}{\partial\varepsilon_{N}}=-1$,
therefore using $\frac{dE}{d\varepsilon_{N}}|_{E=E_{0}}=v_{N}^{2}$
(resulting from Feynman-Hellman theorem) one finds: \begin{equation}
\frac{dz_{N+1}(E)}{dE}|_{E=E_{0}}=\frac{d\varepsilon_{N}}{dE}|_{E=E_{0}}=v_{N}^{-2}.\end{equation}
 \end{proof}
\begin{cor}
For a random potential with Anderson localization the derivative $\varphi_{n}',$
typically takes exponentially large (in $N$) values.
\end{cor}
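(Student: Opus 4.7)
The plan is to combine Lemma~\ref{u_z_lemma} with the corollary relating $z_n$ to the Pr\"uffer angle, namely $z_n(E)=\tan\varphi_n(E)$. Differentiating gives $z_{N+1}'(E)=\sec^2(\varphi_{N+1}(E))\,\varphi_{N+1}'(E)$. At an eigenvalue $E=E_0$ with Dirichlet b.c.\ one has $u_{N+1}(E_0)=0$, hence $z_{N+1}(E_0)=0$ and therefore $\varphi_{N+1}(E_0)\in\pi\mathbb{Z}$, so $\cos^2(\varphi_{N+1}(E_0))=1$. This collapses the chain rule to the simple identity $z_{N+1}'(E_0)=\varphi_{N+1}'(E_0)$. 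Feeding this into Lemma~\ref{u_z_lemma} yields the key formula $\varphi_{N+1}'(E_0)=v_N^{-2}$, where $v_N$ is the $N$-th component of the normalized eigenfunction associated with $E_0$. (An analogous equality holds for Neumann b.c., since there $z_{N+1}(E_0)=1$, giving $\sec^2(\varphi_{N+1}(E_0))=2$ and only changing a constant factor.)

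Next I would invoke Anderson localization: for a typical realization $\omega$, every eigenfunction is exponentially localized about some center $x_0=x_0(E_0,\omega)\in\Lambda$, with $|v_N|^2\lesssim e^{-\mu|N-x_0|}$. Since only an exponentially thin layer near the right boundary hosts centers with $|N-x_0|=O(1)$, for most eigenvalues $|N-x_0|$ is of order $N$, so $v_N^2$ is exponentially small in $N$, and consequently $\varphi_{N+1}'(E_0)$ is exponentially large in $N$. Combined with the upper bound $\varphi_{N+1}'\le(\eta^{-N}-1)/(\eta^{-1}-1)$ from \eref{eq_16}, this shows the upper bound on the rotation rate is essentially saturated at typical eigenvalues, which is the content of the corollary.

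The main obstacle is making \emph{typically} quantitative. The cleanest route is to use the Aizenman fractional moment estimate \eref{eq:Aizenman1}: applied with $x=N$, it controls $\sum_n\langle|v_N^{(n)} v_{y}^{(n)}|^s\rangle\le D e^{-\mu s|N-y|}$, and a Chebyshev-type argument then shows that with overwhelming probability the boundary amplitude $|v_N^{(n)}|^2$ is exponentially small for all but an exponentially small fraction of the eigenvalues. Since the corollary only asserts a qualitative ``typically'' statement, this quantitative version is more than sufficient; a fully rigorous probabilistic bound along these lines could be stated but is not required for the claim as written.
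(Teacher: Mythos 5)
Your argument is correct and is essentially the paper's intended one: the corollary is meant to follow directly from Lemma~\ref{u_z_lemma} together with $z_{N+1}=\tan\varphi_{N+1}$ and the fact that $\cos^2\varphi_{N+1}(E_0)=1$ at a Dirichlet eigenvalue, giving $\varphi_{N+1}'(E_0)=v_N^{-2}$, which localization makes exponentially large for typical eigenstates. Your additional sketch of how to quantify ``typically'' via the fractional moment bound goes slightly beyond what the paper states but does not change the route.
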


\section{\label{sec:AppB}Expression for $\eta(W)$}

\label{A_eta}

The minimum of $q$ (as in \eref{eq_qx}) can be either at local
extremum point or at the boundaries of $A$ (defined by \eref{eq_defA});
we will check both cases:
\begin{enumerate}
\item local extrema in inner points: \begin{equation}
\fl
\left\{ \begin{array}{l}
\frac{\partial q}{\partial x}=0\Rightarrow\sin\varphi_{n}\cos\varphi_{n}=x\sin^{2}\varphi_{n}\quad\Longrightarrow\quad\cos\varphi_{n}=x\sin\varphi_{n}\qquad\left(\varphi_{n}\ne0,\pi\right)\\
\frac{\partial q}{\partial\varphi_{n}}=0\Rightarrow2x\cos2\varphi_{n}=x^{2}\sin2\varphi_{n}\quad\Longrightarrow\quad2\cos2\varphi_{n}=x\sin2\varphi_{n}\qquad\left(\varphi_{n}\ne0,\pi,\: x\ne0\right)\end{array}\right.\label{eq_q_grad}\end{equation}
 The case $x=0$ is not interesting: if indeed extrema obtained for
$x=0,$ then $q(x=0,\varphi_{n})\equiv1$. Elsewhere \eref{eq_q_grad}
implies $\sin^{2}\varphi=0$ $\left(\varphi=k\pi\right);$ consequently
it is not an inner point of $A$. That is: \textit{no} local extrema
in \textit{A }with $q(x,\varphi_{n})\ne1$.
\item Boundaries of $A$: \begin{eqnarray}
 & \varphi_{n}=0,\pi\qquad\Rightarrow\qquad q=1\\
 & x=W+1\qquad\Rightarrow\qquad\varphi_{n}^{extr}=\frac{1}{2}\arctan\frac{2}{\left(W+1\right)}\left(+\frac{\pi}{2}\right)\label{eq_12a}\\
 & x=-(W+1)\qquad\Rightarrow\qquad\varphi_{n}^{extr}=\frac{-1}{2}\arctan\frac{2}{\left(W+1\right)}\left(+\frac{\pi}{2}\right)\label{eq_12b}\end{eqnarray}
 (with superscript $^{extr}$ standing for an extremum value on the
appropriate boundary.)\\

\end{enumerate}
Substituting \eref{eq_12a} and \eref{eq_12b} in \eref{eq_qx}
gives the following four options:\begin{equation}
q^{extr}=\left\{ \begin{array}{l}
1-(W+1)\sin\left(\arctan\frac{2}{W+1}\right)+(W+1)^{2}\sin^{2}\left(\frac{1}{2}\arctan\frac{2}{W+1}\right)\\
1-(W+1)\sin\left(\arctan\frac{2}{W+1}\right)+(W+1)^{2}\cos^{2}\left(\frac{1}{2}\arctan\frac{2}{W+1}\right)\\
1-(W+1)\sin\left(\arctan\frac{2}{W+1}\right)+(W+1)^{2}\sin^{2}\left(\frac{1}{2}\arctan\frac{2}{W+1}\right)\\
1-(W+1)\sin\left(\arctan\frac{2}{W+1}\right)+(W+1)^{2}\cos^{2}\left(\frac{1}{2}\arctan\frac{2}{W+1}\right)\end{array}\right.\label{eq_qextr}\end{equation}
 By the arguments that lead to \eref{eq_etadef}, all the entries
of \eref{eq_qextr} are positive. The minimum of $q$ is hence:
\begin{eqnarray}
\eta(W) &=\min_{\left|E-\varepsilon_{n}\right|\le W+1,\varphi_{n}}\left\{ q\right\} \label{eq_13} \\
           &=1-(W+1)\sin\left(\arctan\frac{2}{W+1}\right)+ \nonumber\\
           &+(W+1)^{2}\min\left\{ \sin^{2}\left(\frac{1}{2}\arctan\frac{2}{W+1}\right),\cos^{2}\left(\frac{1}{2}\arctan\frac{2}{W+1}\right)\right\} \nonumber
\end{eqnarray}
Using \eref{eq_17} combined with \eref{eq_12a}, \eref{eq_12b}
one can obtain a simpler, yet weaker bound on $\eta$:

\begin{equation}
\eta(W)\ge\min\left\{ \sin^{2}\left(\frac{1}{2}\arctan\frac{2}{\left(W+1\right)}\right),\cos^{2}\left(\frac{1}{2}\arctan\frac{2}{\left(W+1\right)}\right)\right\} \label{eq_14}\end{equation}

\Bibliography{99}
\bibitem{A}Aizenman, M.: Localization at Weak Disorder: Some Elementary
Bounds. \emph{Rev. Math. Phys.} \textbf{6}, 1163 (1994).

\bibitem{AG}Aizenman, M. and Graf, G.M.: Localization Bounds for
an Electron Gas. \emph{J. Phys. A: Math. Gen.} \textbf{31}, 6783 (1998).

\bibitem{AM}Aizenman, M. and Molchanov, S.: Localization at Large
Disorder and Extreme Energies: an Elementary Derivation. \emph{Commun.
Math, Phys.} \textbf{157}, 245-278 (1993).

\bibitem{AW}Aizenman, M. and Warzel, S.: On the joint distribution
of energy levels of random Schr\"odinger operators. \emph{J. Phys. A:
Math. Theor.} \textbf{42}, 045201 (2009).

\bibitem{BHS}Bellissard, J. Hislop, P. and Stolz, G.: Correlation
estimates in the Anderson model. \emph{J. Stat. Phys.} \textbf{129},
649-662 (2007).

\bibitem{FHS}Froese, R. Hasler, D. and Spitzer, W.: A geometric approach
to absolutely continuous spectrum for discrete Schr\"odinger operators.
arXiv:1004.4843 (2010).

\bibitem{GK}Germinet, F. and Klein, A.: New Characterizations of
the Region of Complete Localization for Random Schr\"odinger Operators.
\emph{J. Stat. Phys.} \textbf{122}, 73-94 (2006).

\bibitem{ishi}Ishii, K.: Localization of eigenstates and transport
phenomena in one-dimensional disordered systems. \emph{Progr. Theoret.
Phys.} \textbf{53}, 77 (1973).

\bibitem{Flach2009}Flach, S. Krimer, D. and Skokos, Ch.: Universal
spreading of wavepackets in disordered nonlinear systems. \emph{Phys.
Rev. Lett.} \textbf{102}, 024101 (2009).

\bibitem{KLS}Kiselev, A. Last, Y. and Simon, B.: Modified Pr\"ufer
and EFGP Transforms and the Spectral Analysis of One-Dimensional Schr\"odinger
Operators. \emph{Commun. Math. Phys.} \textbf{194}, 1-45 (1997).

\bibitem{FKS1}Fishman, S. Krivolapov, Y. and Soffer, A.: On the problem
of dynamical localization in the nonlinear Schr\"odinger equation with
a random potential. \emph{J. Stat. Phys.} \textbf{131}, 843-865 (2008).

\bibitem{FKS2}Fishman, S. Krivolapov, Y. and Soffer, A.: Perturbation
Theory for the Nonlinear Schr\"odinger Equation with a random potential.
\emph{Nonlinearity} \textbf{22}, 2861-2887 (2009).

\bibitem{FKS3}Krivolapov, Y. Fishman, S. and Soffer, A.: A numerical
and symbolical approximation of the Nonlinear Anderson Model. \emph{New
J. Phys. }\textbf{12}\emph{, 063035 (2010).}

\bibitem{Mi}Minami, N.: Local fluctuation of the spectrum of a multidimensional
Anderson tight binding model. \emph{Comm. Math. Phys.} \textbf{177},
709-725 (1996).

\bibitem{Molina1998}Molina, M. I.: Transport of localized and extended
excitations in a nonlinear Anderson model. \emph{Phys. Rev. B} \textbf{58},
12547--12550 (1998).

\bibitem{Kopidakis2008}Kopidakis, G. Komineas, S. Flach, S. and Aubry,
S.: Absence of wave packet diffusion in disordered nonlinear systems.
\emph{Phys. Rev. Lett.} \textbf{100}, 084103 (2008).

\bibitem{Pikovsky2008}Pikovsky, A. S. and Shepelyansky,D. L.: Destruction
of Anderson localization by a weak nonlinearity. \emph{Phys. Rev.
Lett.} \textbf{100}, 094101 (2008).

\bibitem{Skokos2009}Skokos, Ch. Krimer, D.O. Komineas, and Flach,
S.: Delocalization of wave packets in disordered nonlinear chains.
\emph{Phys. Rev. E} \textbf{79,} 056211 (2009).

\bibitem{WZ}Wang, W.-M. and Zhang, Z.: Long time Anderson localization
for nonlinear random Schr\"odinger equation. \emph{J. Stat. Phys.} \textbf{134},
953-968 (2009).
\endbib
\end{document}